\title{On the lengths of $t$-based confidence intervals}
\author{Yu Zhang$^1$, Xiangzhong Fang$^{1}$\footnote{Corresponding author.  E-mail address:  xzfang@math.pku.edu.cn}\\
1. School of Mathematical Sciences, Peking University, \\ Beijing, 100871, China\\
}
\date{ }
\newtheorem{Prop}{Proposition}
\newtheorem{assum}{Assumption}
\begin{document}
\maketitle
\setcounter{tocdepth}{4}
\setcounter{secnumdepth}{4}

\begin{abstract}
Given $n=mk$ $iid$ samples from $N(\theta,\sigma^2)$ with $\theta$ and $\sigma^2$ unknown, we have two ways to construct $t$-based confidence intervals for $\theta$. The traditional method is to treat these $n$ samples as $n$ groups and calculate the intervals. The second, and less frequently used, method is to divide them into $m$ groups with each group containing $k$ elements. For this method, we calculate the mean of each group, and these $k$ mean values can be treated as $iid$ samples from $N(\theta,\sigma^2/k)$. We can use these $k$ values to construct $t$-based confidence intervals. Intuition tells us that, at the same confidence level $1-\alpha$, the first method should be better than the second one. Yet if we define “better” in terms of the expected length of the confidence interval, then the second method is better because the expected length of the  confidence interval obtained from the first method is shorter than the one obtained from the second method. Our work proves this intuition theoretically. We also specify that when the elements in each group are correlated, the first method becomes an invalid method, while the second method can give us correct results. We illustrate this with analytical expressions.
\end{abstract}

\vskip0.7cm \noindent\textbf{Key words:} \textit{$t$-based confidence intervals, the expected length, coverage probability}

\section{Introduction}\label{sec:intro}
In this paper, we consider $t$-based confidence intervals for mean parameter $\theta$ under normal cases. Suppose we have a group of variables $X_1,\cdots,X_n$, which are $iid$ $N(\theta,\sigma^2)$ random variables, where $\theta$ and $\sigma^2$ are both unknown, and the confidence level is $1-\alpha$. A $t$-based confidence interval for $\theta$ can be constructed as follows. We choose the proper pivotal quantity and construct the confidence interval
\begin{align}\label{n-case}
  \left[\overline{X}-t_{n-1,1-\alpha/2}\frac{S_n}{\sqrt{n}},\quad\overline{X}+t_{n-1,1-\alpha/2}\frac{S_n}{\sqrt{n}}\right],
\end{align}
where
\[
\overline{X} = \overline{X}_{*}=\frac{1}{n}\sum_{i=1}^{n}X_i \text{ and } S_n^2=\frac{1}{n-1}\sum_{i=1}^{n}\left(X_i-\overline{X}\right)^2.
\]

Interval (\ref{n-case}) is the traditional way to construct a confidence interval for $\theta$. Now we introduce an alternative.

Suppose we can partition $X_1,\cdots,X_n$ into $m$ equal-sized groups (assuming $n = mk $), and denote the mean of each group by $\overline{X}_1,\cdots,\overline{X}_m$, which are $iid$ $N(\theta,\frac{m}{n}\sigma^2)$. Using these $m$ new variables, the $t$-based confidence interval at level $1-\alpha$ for $\theta$ takes the form
\begin{align}\label{m-case}
\left[\overline{X} - t_{m -1, 1 - \frac{\alpha}{2}} \frac{S_m}{\sqrt m}, \quad \overline{X} + t_{m -1, 1 - \frac{\alpha}{2}} \frac{S_m}{\sqrt m}\right],
\end{align}
where
\[
\overline{X} = \overline{X}_{**} = \frac1m \sum_{i=1}^m \overline X_i = \frac1n \sum_{j=1}^n X_j \text{ and } S^2_m = \frac1{m-1}\sum_{i=1}^m \left(\overline X_i - \overline{X}\right)^2.
\]
(Note that when $m = n$, this reduces to the conventional $t$-based confidence interval, that is, (\ref{m-case}) = (\ref{n-case}).)

We can see that (\ref{n-case}) and (\ref{m-case}) share the same center point, $\overline{X}_{*} = \overline{X}_{**}$, and the only difference is the distance from the center point. Now that there are two ways to obtain the confidence intervals for $\theta$, a question may arise: Are these methods equivalent? Thus far, the literature has not compared the two. Therefore, in this paper we propose a method for comparing them.

To begin this analysis we first need to figure out how to evaluate the confidence intervals. The literature provides several ways to evaluate confidence intervals. In \cite{Pratt1961}, the author proposed that one natural measure of confidence intervals is the expected length. So the expected
length of the interval can be a rule to decide the better result. In \cite{Brown2001}, they proposed alternative intervals for a binomial proportion, and they examined the length of each interval. In \cite{Hall1988}, they proposed another rule coverage probability. However, they didn't deny the validity of the expected length. They thought if we used the coverage as our criterion, it would be possible to construct the "shortest" confidence intervals and these intervals could have much improved coverage accuracy as well as a shorter length. Based on the above we use the expected length of the intervals and the coverage probability as our criteria. We prove theoretically that (\ref{n-case}) is better than (\ref{m-case}) when the samples are $iid$, while if we face non-$iid$ samples, we better use (\ref{m-case}). And we also apply our results to data on HIV patients.

The rest of the paper is organized as follows. Section \ref{sec:main-theorems} introduces our main theorems and related lemmas. In this section, we first present the results when all samples are $iid$. Then we generalize the results to a non-$iid$ case, which is reasonable in real practice. Section \ref{sec:proofs} concludes the detailed proofs of the results in Section \ref{sec:main-theorems}. Section \ref{sec:extensions} presents some extensions related to our results. Section \ref{sec:simulations} contains several simulations, including a real data analysis. Section \ref{sec:discussion} is our discussions.

\section{Main Theorems}\label{sec:main-theorems}

Before proceeding to our main results, we need to define two symbols.
\[
    I_n=2 t_{n -1, 1 - \frac{\alpha}{2}} \frac{S_n}{\sqrt n},
\]
\[
    I_m=2 t_{m -1, 1 - \frac{\alpha}{2}} \frac{S_m}{\sqrt m}
\]
where $I_n$ is the length of (\ref{n-case}) and $I_m$ is the length of (\ref{m-case}).

The main task of our paper is to find out the relationship between $I_n$ and $I_m$. As a first step in this process, we introduce the following trivial proposition related to $I_n^2$ and $I_m^2$.

\begin{Prop}\label{thm:prop}
  \[
    \mathbb{E}I_n^2 \le \mathbb{E}I_m^2.
  \]
\end{Prop}

Proposition \ref{thm:prop} tells us the relationship between $\mathbb{E}I_n^2$ and $\mathbb{E}I_m^2$. However, we
can't conclude that $\mathbb{E}I_n \le \mathbb{E}I_m$ from it directly, as we can with positive real numbers.

The first main result of this paper is Theorem \ref{thm:main}, which demonstrates the relationship between $\mathbb{E}I_n$ and $\mathbb{E}I_m$ when we are dealing with $iid$ samples.

\begin{theorem}\label{thm:main}
  With fixed $0 < \alpha < 1$,
  \begin{align}\label{ieq:EI_n<EI_m}
    \mathbb{E}I_n \le \mathbb{E}I_m.
  \end{align}

  The expected length of $I_m$, i.e., $\mathbb{E}I_m$ is
\[
\begin{aligned}
\mathbb{E} \left[2 t_{m -1, 1 - \frac{\alpha}{2}} \frac{S_m}{\sqrt m} \right] &= 2 t_{m -1, 1 - \frac{\alpha}{2}} \frac{\sqrt{m\sigma^2/n} \mathbb{E} \chi_{m-1}}{\sqrt m \sqrt{m-1}}\\
& = \frac{2\sqrt{2}\sigma}{\sqrt{n}} \cdot \frac{t_{m -1, 1 - \frac{\alpha}{2}} \Gamma(\frac{m}{2})}{\sqrt{m-1}\Gamma(\frac{m-1}{2})}.
\end{aligned}
\]

So, (\ref{ieq:EI_n<EI_m}) is equivalent to the follows. The expression
\[
\frac{t_{m-1, 1-\frac{\alpha}{2}} \Gamma\left(\frac{m}{2}\right)}{\sqrt{m-1}\, \Gamma\left( \frac{m-1}{2}\right)}
\]
as a function of $m$ is decreasing in $m \ge 2$.
\end{theorem}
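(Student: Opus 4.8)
The plan is to reparametrize by the degrees of freedom $\nu=m-1$ and reduce the asserted monotonicity to one clean statement about a quantile. Writing $\mu_\nu=\mathbb E[\chi_\nu]/\sqrt\nu$ and using $\Gamma(m/2)/\Gamma((m-1)/2)=\mathbb E[\chi_{m-1}]/\sqrt2$, the displayed quantity equals $\tfrac1{\sqrt2}\,t_{\nu,1-\alpha/2}\,\mu_\nu$, so it suffices to prove that $a_\nu:=t_{\nu,1-\alpha/2}\,\mu_\nu$ is decreasing in $\nu\ge1$. I would introduce $W_\nu=\sqrt{\chi^2_\nu/\nu}$ and recall the representation $T_\nu=Z/W_\nu$ of the $t_\nu$-statistic, with $Z\sim N(0,1)$ independent of $W_\nu$. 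Setting $U_\nu:=W_\nu/\mu_\nu$ (so $U_\nu\ge0$ and $\mathbb E U_\nu=1$), the number $a_\nu=\mu_\nu\,t_{\nu,1-\alpha/2}$ is exactly the upper $(1-\alpha/2)$-quantile of $Z/U_\nu$; conditioning on $U_\nu$ and integrating out $Z$ gives the implicit description
\[
\mathbb E\!\left[\Phi(a_\nu U_\nu)\right]=1-\tfrac{\alpha}{2}.
\]
This representation is the engine of the proof: it converts a comparison of $t$-quantiles into a comparison of expectations of a fixed concave function. (Note it also sidesteps the absence of a mean for $T_1$, since $U_1$ is merely half-normal.)

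Next I would carry out the comparison. Fix $\nu<\nu'$; I want $a_{\nu'}\le a_\nu$. Since $a\mapsto\mathbb E[\Phi(aU_{\nu'})]$ is strictly increasing (its derivative is $\mathbb E[U_{\nu'}\phi(aU_{\nu'})]>0$), it is enough to show $\mathbb E[\Phi(a_\nu U_{\nu'})]\ge 1-\tfrac{\alpha}{2}=\mathbb E[\Phi(a_\nu U_\nu)]$. Because $1-\alpha/2>1/2$ we have $a_\nu>0$, and $\Phi$ is concave on $[0,\infty)$ (indeed $\Phi''(x)=-x\phi(x)\le0$ there), so $u\mapsto\Phi(a_\nu u)$ is concave on the common support $[0,\infty)$ of the $U$'s. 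Thus the desired inequality is precisely $\mathbb E[g(U_{\nu'})]\ge\mathbb E[g(U_\nu)]$ for concave $g$, which follows once I establish the convex ordering $U_{\nu'}\le_{\mathrm{cx}}U_\nu$ (equal means, with $U_\nu$ the more dispersed).

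The main work, and the step I expect to be hardest, is this convex order. I would use the Karlin–Novikov cut criterion: two densities with equal means whose difference changes sign exactly twice, in the pattern $+,-,+$, are convex ordered. The density of $U_\nu$ is the chi density rescaled to unit mean, $p_\nu(u)\propto u^{\nu-1}\exp(-\tfrac12 b_\nu u^2)$ on $u>0$, with $b_\nu=(\mathbb E\chi_\nu)^2=\nu\mu_\nu^2$. Hence
\[
\log p_{\nu'}(u)-\log p_\nu(u)=\mathrm{const}+(\nu'-\nu)\log u-\tfrac12\,(b_{\nu'}-b_\nu)\,u^2 .
\]
Since $\mathbb E\chi_\nu$ is strictly increasing we have $b_{\nu'}>b_\nu$, so the right-hand side is strictly unimodal on $(0,\infty)$ and tends to $-\infty$ at both endpoints; it therefore vanishes at most twice, i.e.\ $p_\nu$ and $p_{\nu'}$ cross at most twice. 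As two distinct densities with the same mean but different variances cannot be stochastically ordered, the number of crossings is exactly two, and the sign pattern of $p_\nu-p_{\nu'}$ is $+,-,+$. The cut criterion then yields $U_{\nu'}\le_{\mathrm{cx}}U_\nu$, closing the chain. The computation is valid for all real $\nu'>\nu\ge1$, so one in fact obtains monotonicity in real $m\ge2$.

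One subtlety worth flagging is a tempting shortcut that fails: since averages of i.i.d.\ variables decrease in convex order, one gets $\chi^2_{\nu'}/\nu'\le_{\mathrm{cx}}\chi^2_\nu/\nu$ for free, but convex order does not survive the concave map $x\mapsto\sqrt x$, so this does \emph{not} deliver the ordering of the chi (rather than chi-square) variables that the concavity of $\Phi$ forces upon us. The explicit density-crossing computation is what handles the square-root scale directly. The remaining ingredients — the quantile representation, the concavity of $\Phi$ on the half-line, and the monotone inversion — are routine, so the entire difficulty is concentrated in the two-crossing verification.
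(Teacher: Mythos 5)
Your proposal is correct, and it takes a genuinely different route from the paper's. The paper argues by contradiction via Lemmas \ref{lm:t_monotone} and \ref{lm:small_h_pro}: if the quantile-ratio inequality \eqref{eq:t_ratio} failed for some $\alpha_0$, then with $\lambda=t_{d_1,1-\alpha_0/2}/t_{d_2,1-\alpha_0/2}$ the function $H_\lambda(x)=\mathbb{P}(t_{d_1}\le x)-\mathbb{P}(t_{d_2}\le x/\lambda)$ vanishes at $x=t_{d_1,1-\alpha_0/2}$ by construction; but the ratio of the two scaled $t$-densities is shown to be unimodal (decreasing, then increasing) and to start at a value $C_\lambda\le 1$, so the integrand $h_\lambda$ changes sign exactly once (pattern $-,+$), whence $H_\lambda<0$ on all of $(0,\infty)$ because $H_\lambda(\infty)=0$ --- a contradiction. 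You avoid both the quantile ratio and the contradiction scaffolding: through the scale-mixture representation $\mu_\nu T_\nu=Z/U_\nu$, with $U_\nu$ a unit-mean normalized chi variable, you characterize the rescaled quantile by $\mathbb{E}[\Phi(a_\nu U_\nu)]=1-\alpha/2$ and reduce monotonicity of $a_\nu$ to the convex order $U_{\nu'}\le_{\mathrm{cx}}U_\nu$, verified by a Karlin--Novikoff two-crossing count (pattern $+,-,+$) on the chi densities, exactness of the two crossings following from equal means. The two proofs are cousins at the technical core --- each counts sign changes of a density difference whose log-ratio is unimodal --- but they apply this to different families ($t$ densities at two scales in the paper, unit-mean chi densities in yours), and the framings buy different things. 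Yours gives monotonicity in real (not just integer) degrees of freedom, cleanly separates the probabilistic step (mixture representation plus concavity of $\Phi$ on the half-line) from the calculus step, and generalizes naturally; the paper's is self-contained, needing no stochastic-order machinery (though your cut criterion reduces to a short chord-subtraction argument, since $\int(p_{\nu'}-p_\nu)\,\mathrm{d}u=\int u\,(p_{\nu'}-p_\nu)\,\mathrm{d}u=0$). The facts you leave as routine --- strict increase of $\mathbb{E}\chi_\nu$ in $\nu$, the monotone inversion in $a$, and impossibility of zero or one crossing for distinct equal-mean densities --- are all correct and straightforward to fill in.
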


In order to prove Theorem \ref{thm:main}, we need several lemmas to help us.

\begin{lemma}\label{lm:t_monotone}
Let $1 \le d_1 < d_2$ be two integers. the following expression holds for all $0 < \alpha < 1$,
\begin{equation}\label{eq:t_ratio}
\frac{t_{d_1, 1-\frac{\alpha}{2}}}{t_{d_2, 1-\frac{\alpha}{2}}} > \frac{\sqrt{d_1}\, \Gamma\left( \frac{d_1}{2}\right)\Gamma\left(\frac{d_2+1}{2}\right)}{\sqrt{d_2}\, \Gamma\left( \frac{d_2}{2}\right) \Gamma\left( \frac{d_1+1}{2}\right)}.
\end{equation}
\end{lemma}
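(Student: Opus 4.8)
The plan is to recast the stated inequality as a single monotonicity statement and then prove it by a stochastic‑ordering argument. Writing $\mathbb{E}\chi_d=\sqrt{2}\,\Gamma(\frac{d+1}{2})/\Gamma(\frac d2)$ and setting $a_d:=\mathbb{E}\chi_d/\sqrt d$, the right–hand side of \eqref{eq:t_ratio} is exactly $a_{d_2}/a_{d_1}$, so \eqref{eq:t_ratio} is equivalent to $t_{d_1,1-\frac{\alpha}{2}}\,a_{d_1}>t_{d_2,1-\frac{\alpha}{2}}\,a_{d_2}$; that is, the product $t_{d,1-\frac{\alpha}{2}}\,a_d$ is strictly decreasing in the integer $d$. (This is the same quantity, with $d=m-1$, whose monotonicity Theorem \ref{thm:main} asks for.) I would prove this for every pair $d_1<d_2$ at once, so no separate telescoping step is needed.

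The key is to read $a_d\,t_{d,1-\frac{\alpha}{2}}$ as a quantile. Using the pivotal representation $T_d = Z/\sqrt{\chi_d^2/d}$ with $Z\sim N(0,1)$ independent of $\chi_d$, put $W_d := a_d T_d = Z\,\mathbb{E}\chi_d/\chi_d$. Since $a_d>0$ and $1-\frac{\alpha}{2}>\frac12$, the number $a_d\,t_{d,1-\frac{\alpha}{2}}$ is precisely the upper‑$\frac{\alpha}{2}$ quantile of $W_d$, i.e. the unique positive $x$ with $\mathbb{P}(W_d>x)=\frac{\alpha}{2}$. Hence it suffices to show that for every fixed $x>0$ the upper tail $\mathbb{P}(W_d>x)$ is strictly decreasing in $d$: if $\mathbb{P}(W_{d_2}>x)<\mathbb{P}(W_{d_1}>x)$ for all $x>0$, then evaluating at $x=a_{d_2}t_{d_2,1-\frac{\alpha}{2}}$ forces $a_{d_1}t_{d_1,1-\frac{\alpha}{2}}>a_{d_2}t_{d_2,1-\frac{\alpha}{2}}$, which is what we want.

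Conditioning on $\chi_d$ gives the clean formula
\[
\mathbb{P}(W_d>x)=\mathbb{E}\big[\,\overline{\Phi}(x M_d)\,\big],\qquad M_d:=\chi_d/\mathbb{E}\chi_d,\quad \overline{\Phi}(u)=\mathbb{P}(Z>u),
\]
where each $M_d$ is a positive random variable with mean $1$. The linchpin observation is that, for fixed $x>0$, the map $m\mapsto\overline{\Phi}(xm)$ is strictly convex on $(0,\infty)$, since its second derivative is $x^3 m\,\phi(xm)>0$ (here $\phi$ is the standard normal density). Because all the $M_d$ share the same mean, the desired tail inequality therefore follows from the convex order $M_{d_2}\le_{\mathrm{cx}} M_{d_1}$ for $d_1<d_2$; by transitivity of $\le_{\mathrm{cx}}$ it is enough to establish the consecutive case $M_{d+1}\le_{\mathrm{cx}}M_d$.

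The main obstacle is thus this last convex‑order step, which I would settle with the single‑crossing (cut) criterion. The densities are $f_{M_d}(m)=K_d\,m^{d-1}e^{-c_d^2 m^2/2}$ on $(0,\infty)$ with $c_d=\mathbb{E}\chi_d$ and $K_d>0$, so the ratio is
\[
\frac{f_{M_{d+1}}(m)}{f_{M_d}(m)}=C\,m\,e^{-\beta m^2/2},\qquad \beta:=c_{d+1}^2-c_d^2>0,\quad C>0,
\]
which is strictly unimodal and vanishes at $0$ and $\infty$; since both densities integrate to $1$ its maximum must exceed $1$, so $f_{M_{d+1}}-f_{M_d}$ changes sign exactly in the pattern $-,+,-$. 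Together with $\mathbb{E}M_{d+1}=\mathbb{E}M_d=1$ this is precisely the hypothesis of the cut criterion and yields $M_{d+1}\le_{\mathrm{cx}}M_d$; strict convexity of $\overline{\Phi}(x\,\cdot)$ then upgrades the tail comparison to the strict inequality we need. The only points demanding genuine care are justifying the convexity reduction and the variation‑diminishing sign count, and checking that the endpoint $d_1=1$ (Cauchy quantiles) is covered, which it is, as every step above is valid for $d\ge 1$.
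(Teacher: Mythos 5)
Your proposal is correct, but it follows a genuinely different route from the paper's. The paper argues by contradiction: assuming \eqref{eq:t_ratio} fails for some $\alpha_0$, it sets $\lambda = t_{d_1,1-\alpha_0/2}/t_{d_2,1-\alpha_0/2}$, observes $H_\lambda(t_{d_1,1-\alpha_0/2})=0$, and then invokes Lemma \ref{lm:small_h_pro}, which shows $H_\lambda(x)<0$ for all $x>0$ whenever $1<\lambda$ is at most the Gamma-ratio bound; that lemma is proved by a single-crossing analysis of the integrand $p_{d_1}(u)-p_{d_2}(u/\lambda)/\lambda$ (via the derivative of $\log r_\lambda$) combined with $H_\lambda(\infty)=0$. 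In substance, both arguments establish the same pointwise tail comparison $\mathbb{P}(a_{d_1}T_{d_1}>y)>\mathbb{P}(a_{d_2}T_{d_2}>y)$ for all $y>0$ --- the paper's ``$H_\lambda<0$ at the critical $\lambda=a_{d_2}/a_{d_1}$'' is literally your statement about $W_{d_1}$ and $W_{d_2}$ --- but the mechanisms differ: you condition on the $\chi_d$ factor in the scale-mixture representation, use strict convexity of $m\mapsto\overline{\Phi}(xm)$, and compare the normalized mixing variables $M_d=\chi_d/\mathbb{E}\chi_d$ in the convex order via the Karlin--Novikoff cut criterion (a two-crossing pattern of the $\chi$-densities), whereas the paper works directly with the $t$-densities (a one-crossing pattern) and needs no stochastic-order machinery. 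Your route buys a direct (non-contradiction) argument, eliminates the need for the companion Lemma \ref{lm:small_h_pro}, and is more conceptual and more general: it shows tail domination for any two normal scale mixtures whose mixing variables have equal means and are convex-ordered. The paper's route buys self-containedness: it is pure calculus, with no appeal to named theorems. Two small patches you should make: justify $\beta=c_{d+1}^2-c_d^2>0$ (e.g., $c_dc_{d+1}=2\Gamma\left(\frac{d+2}{2}\right)/\Gamma\left(\frac{d}{2}\right)=d$ while $c_d<\sqrt{d}$ by strict Jensen, hence $c_{d+1}>\sqrt{d}>c_d$), and spell out the cut-criterion step (subtract from $\overline{\Phi}(x\,\cdot)$ the affine function through the two crossing points of the densities; equal means kill the affine part, and the sign pattern makes the remaining integral strictly negative), since the strictness of the final inequality rests on exactly that computation.
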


\begin{lemma}\label{lm:small_h_pro}
If
\[
1 < \lambda \le \frac{\sqrt{d_1} \Gamma\left(\frac{d_1}{2}\right) \Gamma\left(\frac{d_2+1}{2}\right)}{\sqrt{d_2} \Gamma\left(\frac{d_2}{2}\right) \Gamma\left(\frac{d_1+1}{2}\right)},
\]
then $H_{\lambda}(x) < 0$ for all $x > 0$. Where $H_{\lambda}(x)$ is the difference between $\mathbb{P}(t_{d_1} \le x)$ and $\mathbb{P}(t_{d_2} \le x/\lambda)$ for $x > 0$.
\end{lemma}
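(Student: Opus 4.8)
The plan is to prove the equivalent statement that
\[
G(x):=-H_\lambda(x)=\mathbb{P}(t_{d_2}\le x/\lambda)-\mathbb{P}(t_{d_1}\le x)
\]
is strictly positive for every $x>0$. Writing $F_d$ for the CDF and $f_d$ for the density of $t_d$, I would first record the two boundary values $G(0)=\tfrac12-\tfrac12=0$ and $\lim_{x\to\infty}G(x)=1-1=0$, so that the whole argument reduces to showing that $G$ has no interior zero and stays above its endpoints. The natural route is to prove that $G$ is \emph{unimodal} on $(0,\infty)$: if $G'$ is positive on an initial interval $(0,x_0)$ and negative on $(x_0,\infty)$, then $G$ rises from $0$ to a positive maximum and falls back toward $0$, forcing $G>0$ throughout, i.e. $H_\lambda<0$.

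To locate the sign of $G'$, I would differentiate, $G'(x)=\lambda^{-1}f_{d_2}(x/\lambda)-f_{d_1}(x)$, and factor out $f_{d_1}(x)>0$, so that $\operatorname{sign}G'(x)=\operatorname{sign}\bigl(\tfrac{f_{d_2}(x/\lambda)}{\lambda f_{d_1}(x)}-1\bigr)$. Using $f_d(x)=c_d(1+x^2/d)^{-(d+1)/2}$ with $c_d=\Gamma(\tfrac{d+1}{2})/[\sqrt{d\pi}\,\Gamma(\tfrac d2)]$, a short computation gives
\[
\frac{f_{d_2}(x/\lambda)}{\lambda f_{d_1}(x)}=\frac{\lambda_{\max}}{\lambda}\,R(x),\qquad R(x):=\frac{(1+x^2/d_1)^{(d_1+1)/2}}{\bigl(1+x^2/(\lambda^2 d_2)\bigr)^{(d_2+1)/2}},
\]
where $\lambda_{\max}=c_{d_2}/c_{d_1}$ is \emph{exactly} the upper bound in the hypothesis, so $\lambda_{\max}/\lambda\ge 1$. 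Thus the sign of $G'$ is governed by whether the factor $R$ lies above or below the level $\lambda/\lambda_{\max}\le 1$.

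The heart of the proof is therefore the shape of $R$. Substituting $u=x^2$ and differentiating the logarithm yields
\[
\frac{d}{du}\log R=\frac{d_1+1}{2(d_1+u)}-\frac{d_2+1}{2(\lambda^2 d_2+u)},
\]
which vanishes at the single point $u^\ast=\dfrac{(d_1+1)\lambda^2 d_2-(d_2+1)d_1}{d_2-d_1}$; one checks $u^\ast>0$ for all $\lambda\ge 1$ (at $\lambda=1$ the numerator equals $d_2-d_1>0$, and it only grows with $\lambda$). Since the derivative is negative for large $u$, $\log R$ (hence $R$) strictly increases on $(0,u^\ast)$ and strictly decreases afterwards, with $R(0)=1$, a maximum exceeding $1$, and $R\to 0$ as $u\to\infty$. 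Because $\lambda/\lambda_{\max}\le 1$, this unimodal $R$ meets the level $\lambda/\lambda_{\max}$ exactly once for $u>0$, necessarily in its decreasing phase; consequently $G'$ is positive on $(0,x_0)$ and negative on $(x_0,\infty)$, which is precisely the unimodality needed.

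I expect the main obstacle to be this single-sign-change claim for $G'$. Everything downstream is bookkeeping, but the argument genuinely relies on two facts working together: the unimodality of $R$ (an analytic fact valid for every $\lambda\ge1$) and the hypothesis $\lambda\le\lambda_{\max}$, which forces the crossing level $\lambda/\lambda_{\max}$ to be $\le 1$ so that it is attained once, and only in the decreasing branch. Were $\lambda>\lambda_{\max}$, the level would exceed $1$ and the unimodal $R$ could cross it twice or not at all, breaking the clean $+/-$ pattern and the conclusion; so I would take care to invoke the hypothesis exactly at this step.
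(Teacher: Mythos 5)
Your proposal is correct and is essentially the paper's own proof in mirror image: your $G'=-h_\lambda$ is the paper's integrand, your ratio $\tfrac{\lambda_{\max}}{\lambda}R(x)$ is the reciprocal of the paper's $r_\lambda(x)=C_\lambda\bigl(1+\tfrac{x^2}{\lambda^2 d_2}\bigr)^{(d_2+1)/2}\bigl(1+\tfrac{x^2}{d_1}\bigr)^{-(d_1+1)/2}$ (so your single-crossing analysis via the log-derivative is the same computation, with the same critical point), and your unimodality conclusion from $G(0)=G(\infty)=0$ is exactly how the paper handles the two regimes $x<x_0$ and $x>x_0$. No gaps; the hypothesis $\lambda\le\lambda_{\max}$ is invoked at the same place and for the same reason as in the paper.
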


If we are able to prove Theorem \ref{thm:main}, does it mean that (\ref{m-case}) is useless? The answer is no. We illustrate this below, focusing specifically on some theorems related to non-$iid$ cases.

In real practice, $X_i$s can't always be $iid$ variables, sometimes there exists some kind of correlations among the part of the variables or the variables can be divided into equal-sized groups and there are correlations within each group, while groups are independent. When dealing with the latter case, we can use (\ref{m-case}) to build the confidence intervals, while (\ref{n-case}) is invalid under this setting.

\begin{assum}\label{as:1}
  Suppose we have $n=mk$ samples. They can be divided into $m$ groups and each group contains $k$ elements. We use $Y_1=(X_1,...,X_k)^T,\cdots,Y_m=(X_{(m-1)k+1},...,X_{mk})^T$ to denote the samples. And we have
$ Y_i \overset{iid}{\sim} N_k(\theta \textbf{1}_k,\Sigma_k) $, where
\[\textbf{1}_k = \left(\begin{array}{c}
                         1 \\
                         \vdots \\
                         1 \\
                         1
                       \end{array}
\right)_{k\times 1} ~and~
\Sigma_k =
\sigma^2\left(
  \begin{array}{cccc}
    1 & \rho_{12} & \ldots & \rho_{1k} \\
    \rho_{21} & 1 & \ddots & \vdots \\
    \vdots & \ddots & \ddots & \rho_{k-1,k} \\
    \rho_{k1} & \cdots & \rho_{k,k-1} & 1 \\
  \end{array}
\right)_{k\times k}
\]
and $\Sigma_k$ is a symmetric positive definite matrix, where all $\rho_{ij} \in (0,1]$.
\end{assum}

We still use $\overline{X}$ to denote the mean of these $n$ samples, that is, $\overline{X}=\frac{X_1+\cdots+X_n}{n}$. Then we have the following generalization.

\begin{theorem}\label{thm:extension}
  Under Assumption \ref{as:1}, the form of the $m$-case interval at level $1 - \alpha$ is also:
  \begin{align*}
    \left[\overline{X} - t_{m -1, 1 - \frac{\alpha}{2}} \frac{S_m}{\sqrt m}, \quad \overline{X} + t_{m -1, 1 - \frac{\alpha}{2}} \frac{S_m}{\sqrt m}\right],
  \end{align*}
  which is the same as (\ref{m-case}). With the similar notation in Section \ref{sec:main-theorems}, we use
  $I_m^g$ to denote the length of the interval, that is, $I_m^g = 2 t_{m -1, 1 - \frac{\alpha}{2}} \frac{S_m}{\sqrt m}$.

  Additionally, we obtain the expected length of the confidence interval under this case:
  \[
  \mathbb{E}I_m^{g} = \frac{2\sqrt{2}}{\sqrt{m(m-1)}}\sqrt{\frac{\Delta}{k^2}}t_{m-1,1-\frac{\alpha}{2}}\frac{\Gamma(\frac{m}{2})}{\Gamma(\frac{m-1}{2})}
  \]
  where $\Delta$ is the sum of the elements in $\Sigma_k$, $\Delta = \sigma^2(k+\sum_{i\neq j}\rho_{ij})$.

  We can extend (\ref{ieq:EI_n<EI_m}) in Theorem \ref{thm:main} to
  \begin{align}\label{ieq:EI_n<EI_m-extend}
    \mathbb{E}(I_n) \le \mathbb{E}(I_m) \le \mathbb{E}(I_m^g)
  \end{align}
\end{theorem}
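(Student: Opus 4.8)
The plan is to reduce the correlated-group setting of Assumption \ref{as:1} to the $iid$ normal framework already analyzed in Theorem \ref{thm:main}, by working with the group means rather than with the raw observations. The crucial observation is that each group mean $\overline{X}_i = \frac{1}{k}\mathbf{1}_k^{T} Y_i$ is a linear functional of the Gaussian vector $Y_i \sim N_k(\theta \mathbf{1}_k, \Sigma_k)$, so that $\overline{X}_i \sim N\!\bigl(\theta,\, \tfrac{1}{k^2}\mathbf{1}_k^{T}\Sigma_k \mathbf{1}_k\bigr)$. Since $\mathbf{1}_k^{T}\Sigma_k \mathbf{1}_k$ is exactly the sum of all entries of $\Sigma_k$, namely $\Delta = \sigma^2\!\left(k + \sum_{i\neq j}\rho_{ij}\right)$, we obtain $\overline{X}_i \overset{iid}{\sim} N(\theta, \Delta/k^2)$; independence across groups is inherited from the independence of the $Y_i$. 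Consequently the pivot $(\overline{X}-\theta)/(S_m/\sqrt m)$ has an exact $t_{m-1}$ distribution, the interval retains the form (\ref{m-case}), and its coverage is exactly $1-\alpha$. This establishes the first assertion of the theorem.

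For the expected length, I would reuse the computation in Theorem \ref{thm:main} verbatim, replacing the $iid$ variance $\sigma^2/k$ of each group mean by $\Delta/k^2$. Writing $(m-1)S_m^2 \sim (\Delta/k^2)\,\chi^2_{m-1}$ and using $\mathbb{E}\chi_{m-1} = \sqrt 2\,\Gamma(m/2)/\Gamma((m-1)/2)$, we get $\mathbb{E}S_m = \frac{\sqrt 2}{\sqrt{m-1}}\sqrt{\Delta/k^2}\,\Gamma(m/2)/\Gamma((m-1)/2)$, and multiplying by $2\,t_{m-1,1-\alpha/2}/\sqrt m$ yields the stated formula for $\mathbb{E}I_m^g$.

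Finally, for the inequality chain (\ref{ieq:EI_n<EI_m-extend}), the left inequality $\mathbb{E}I_n \le \mathbb{E}I_m$ is precisely Theorem \ref{thm:main} and needs nothing new. For the right inequality I would observe that $\mathbb{E}I_m$ and $\mathbb{E}I_m^g$ carry identical factors in $m$, $t_{m-1,1-\alpha/2}$, and the Gamma ratio, differing only through the scale: $\mathbb{E}I_m$ carries $\sigma/\sqrt k$ while $\mathbb{E}I_m^g$ carries $\sqrt\Delta/k$. Hence $\mathbb{E}I_m \le \mathbb{E}I_m^g$ is equivalent to $\sigma\sqrt k \le \sqrt\Delta$, i.e. $\sigma^2 k \le \Delta$, which holds because $\Delta = \sigma^2\!\left(k + \sum_{i\neq j}\rho_{ij}\right)$ and every $\rho_{ij}\in(0,1]$ forces $\sum_{i\neq j}\rho_{ij}\ge 0$, with strict inequality once $k\ge 2$.

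I do not anticipate a genuine obstacle: once the group means are recognized as $iid$ Gaussian, the entire result rides on Theorem \ref{thm:main} together with a one-line variance comparison. The only point demanding care is the variance bookkeeping, keeping the factor $\Delta/k^2$ straight against the $iid$ value $\sigma^2/k$, and confirming that it is the \emph{positivity} of the correlations, rather than merely the positive-definiteness of $\Sigma_k$, that delivers $\Delta \ge \sigma^2 k$.
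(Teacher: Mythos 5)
Your proposal is correct and follows essentially the same route as the paper's own proof: reduce to the group means $Z_i=\frac{1}{k}\mathbf{1}_k^{T}Y_i \overset{iid}{\sim} N\left(\theta,\Delta/k^{2}\right)$, note the pivot is exactly $t_{m-1}$, and compute $\mathbb{E}S_m$ from the moments of a scaled $\chi_{m-1}$ variable. In fact you go a step further than the paper, whose proof stops at the formula for $\mathbb{E}I_m^{g}$ and never explicitly verifies the right-hand inequality $\mathbb{E}I_m \le \mathbb{E}I_m^{g}$ in (\ref{ieq:EI_n<EI_m-extend}); your one-line scale comparison ($\sigma^{2}k \le \Delta$, forced by $\rho_{ij}>0$, with the left-hand inequality being exactly Theorem \ref{thm:main}) supplies precisely that omitted step.
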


Now we generalize our results a little bit further.

\begin{assum}\label{as:2}
  Let $X_i = (X_{i1},\cdots,X_{ik})^T
                    \overset{iid}{\sim} N(\theta \textbf{1}_k,\Sigma_k)
  $ $(i=1,\cdots,m)$, $n=mk$,
  where $\textbf{1}_k$ and $\Sigma_k$ have the same form in Assumption \ref{as:1}.
\end{assum}

We define two different sample variances here. One is
\[
S_n^2=\frac{1}{n-1}\sum\limits_{i=1}^{m}\sum\limits_{j=1}^{k}(X_{ij}-\overline{X})^2
\]
where $\overline{X}=\frac{1}{n}\sum\limits_{i=1}^{m}\sum\limits_{j=1}^{k}X_{ij}$. The other is
\[
S_m^2=\frac{1}{m-1}\sum\limits_{i=1}^{m}\sum\limits_{j=1}^{k}(\overline{X_i}-\overline{X})^2
\]
where $\overline{X_i}=\frac{1}{k}\sum\limits_{j=1}^{k}X_{ij}$.

The next theorem tells us under Assumption \ref{as:2}, if we calculate $n$-case interval, then we can't obtain the desired coverage probability.
\begin{theorem}\label{thm:le 1- alpha}
 Under Assumption \ref{as:2},

  \[\lim\limits_{n\rightarrow\infty}\mathbb{P}(\overline{X}-t_{n-1,1-\frac{\alpha}{2}}\frac{S_n}{\sqrt{n}}
  \le \theta \le
  \overline{X}+t_{n-1,1-\frac{\alpha}{2}}\frac{S_n}{\sqrt{n}}) < 1-\alpha\]
\end{theorem}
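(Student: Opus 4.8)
The plan is to rewrite the coverage probability as a statement about the studentized pivot $T_n := \sqrt{n}(\overline{X}-\theta)/S_n$, identify its limiting distribution, and compare that limit against the limiting cutoff. The event inside the probability is exactly $\{|T_n| \le t_{n-1,1-\frac{\alpha}{2}}\}$, and since $t_{n-1,1-\frac{\alpha}{2}} \to z_{1-\frac{\alpha}{2}}$ (the standard normal quantile) as $n\to\infty$, it suffices to show that $T_n$ converges in distribution to $N(0,\tau^2)$ for some $\tau>1$. The inflated limiting variance $\tau^2>1$ is precisely what forces undercoverage: the estimator $S_n^2$ targets the marginal variance $\sigma^2$, but the true scaled variance of $\overline{X}$ is larger because the within-group correlations are positive.

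For the numerator I would use $\overline{X} = \frac1m\sum_{i=1}^m \overline{X_i}$ with $\overline{X_i} \overset{iid}{\sim} N(\theta, \Delta/k^2)$, where $\Delta = \sigma^2(k+\sum_{i\ne j}\rho_{ij})$ is the sum of all entries of $\Sigma_k$ as in Theorem \ref{thm:extension}. Gaussianity makes this exact rather than merely asymptotic: $\sqrt{n}(\overline{X}-\theta)\sim N(0,\Delta/k)$ for every $n$. For the denominator I would establish $S_n^2 \overset{p}{\to} \sigma^2$. Writing $\sum_{i,j}(X_{ij}-\overline{X})^2 = \sum_{i,j}(X_{ij}-\theta)^2 - n(\overline{X}-\theta)^2$, the first term is a sum of $m$ i.i.d. group contributions $\sum_j (X_{ij}-\theta)^2$, each with mean $k\sigma^2$ and finite variance, so the weak law gives $\frac1n\sum_{i,j}(X_{ij}-\theta)^2 \overset{p}{\to} \sigma^2$; the correction term is $O_p(1/n)$ after dividing by $n-1$ and is therefore negligible. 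Slutsky's theorem then yields
\[
T_n = \frac{\sqrt{n}(\overline{X}-\theta)}{S_n} \overset{d}{\to} N\!\left(0, \tau^2\right), \qquad \tau^2 = \frac{\Delta}{k\sigma^2} = 1 + \frac{1}{k}\sum_{i \ne j}\rho_{ij}.
\]

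To finish, observe that because every $\rho_{ij}\in(0,1]$ is strictly positive, $\sum_{i\ne j}\rho_{ij} > 0$ and hence $\tau > 1$. Writing $Z\sim N(0,1)$ and $c = z_{1-\frac{\alpha}{2}}$, combining the distributional convergence of $T_n$ with $t_{n-1,1-\frac{\alpha}{2}}\to c$ gives
\[
\lim_{n\to\infty}\mathbb{P}\!\left(|T_n| \le t_{n-1,1-\frac{\alpha}{2}}\right) = \mathbb{P}(|\tau Z| \le c) = \mathbb{P}\!\left(|Z| \le \tfrac{c}{\tau}\right) < \mathbb{P}(|Z| \le c) = 1-\alpha,
\]
where the strict inequality uses $c/\tau < c$ together with the strict monotonicity of the normal CDF.

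The points needing genuine care are twofold. First, the cutoff $t_{n-1,1-\frac{\alpha}{2}}$ itself drifts with $n$ while $T_n$ only converges in distribution, so the passage to the limit cannot be a bare continuous mapping argument; I would justify it through the uniform convergence of CDFs (Pólya's theorem applies because the limit law $N(0,\tau^2)$ has a continuous distribution function), letting me replace the moving cutoff by its limit $c$. Second, establishing $S_n^2 \overset{p}{\to} \sigma^2$ must be done carefully in the presence of within-group dependence; the key is that the decomposition above isolates a sum of independent group blocks, so that $S_n^2$ consistently estimates the marginal variance $\sigma^2$ and \emph{not} the correlation-inflated quantity $\Delta/k$. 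This mismatch is the conceptual heart of the result: the interval is built using $\sigma^2$ but the dispersion of $\overline{X}$ is governed by the larger $\Delta/k$, so the interval is systematically too short and undercovers.
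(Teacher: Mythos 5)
Your proof is correct, and it reaches the conclusion by a genuinely different route than the paper's. The paper argues by pointwise comparison: using its Lemma \ref{neq:lemma-1} ($S_n^2 = \sigma^2 + o_p(1)$, which is strictly below $\frac{1}{k}\mathbf{1}_k^T\Sigma_k\mathbf{1}_k = \Delta/k$), it factors the studentized statistic as
\[
\left|\frac{\sqrt{n}(\overline{X}-\theta)}{S_n}\right| \;=\; \left|\frac{\sqrt{n}(\overline{X}-\theta)}{\sqrt{\Delta/k}}\right|\cdot\sqrt{\frac{\Delta/k}{S_n^2}},
\]
bounds the coverage probability at each $n$ strictly below $\mathbb{P}\left(\left|N(0,1)\right|\le t_{n-1,1-\alpha/2}\right)$, and then lets $n\to\infty$ on the right-hand side to get $1-\alpha$. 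You instead compute the limiting law of the studentized statistic directly: exact normality of the numerator, $S_n^2 \overset{p}{\to} \sigma^2$ via the i.i.d.\ group-block decomposition and the weak law, and Slutsky give $T_n \Rightarrow N(0,\tau^2)$ with $\tau^2 = \Delta/(k\sigma^2) > 1$, whence the limiting coverage is $2\Phi(z_{1-\alpha/2}/\tau)-1 < 1-\alpha$. In effect you have proved the paper's Theorem \ref{thm:exact} and deduced Theorem \ref{thm:le 1- alpha} as a corollary. Your route buys real rigor at two points where the paper is loose: first, the paper passes a strict inequality through a limit, which by itself only yields ``$\le$'' (and the pointwise strict inequality itself requires $S_n^2 < \Delta/k$ surely, whereas Lemma \ref{neq:lemma-1} only gives it with probability tending to one); in your version strictness in the limit is automatic because the limit is computed exactly and $\tau$ is a fixed constant exceeding $1$. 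Second, the paper invokes $S_n^2 = \mathbb{E}S_n^2 + o_p(1)$ without proof, while your WLLN argument on the independent group sums $\sum_j (X_{ij}-\theta)^2$ is self-contained. Your explicit treatment of the drifting cutoff $t_{n-1,1-\alpha/2}\to z_{1-\alpha/2}$ via P\'olya's theorem is likewise a detail the paper glosses over.
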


We use Lemma \ref{neq:lemma-1} to help us finish the proof of Theorem \ref{thm:le 1- alpha}.
\begin{lemma}\label{neq:lemma-1}
  \begin{equation}
    \frac{1}{k}\textbf{1}_k^T\Sigma_k\textbf{1}_k > S_n^2
  \end{equation}
\end{lemma}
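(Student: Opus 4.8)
The plan is to reduce the claimed inequality, which compares the deterministic quantity $\frac{1}{k}\mathbf{1}_k^T\Sigma_k\mathbf{1}_k$ with the random sample variance $S_n^2$, to the single scalar fact that $\frac{1}{k}\mathbf{1}_k^T\Sigma_k\mathbf{1}_k > \sigma^2$. Because $S_n^2$ is random while the left-hand side is a constant, I would first read the inequality in the sense that is actually used in Theorem~\ref{thm:le 1- alpha}, namely as a comparison with $\mathbb{E}S_n^2$ (equivalently with the probability limit of $S_n^2$, identified below). With that reading fixed, the whole argument is a second-moment computation combined with the positivity constraint $\rho_{ij}\in(0,1]$ from Assumption~\ref{as:2}.

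First I would expand the centered sum of squares about the true mean,
\[
\sum_{i=1}^{m}\sum_{j=1}^{k}(X_{ij}-\overline{X})^2 = \sum_{i=1}^{m}\sum_{j=1}^{k}(X_{ij}-\theta)^2 - n(\overline{X}-\theta)^2,
\]
and take expectations, using that every diagonal entry of $\Sigma_k$ equals $\sigma^2$, to get $\mathbb{E}\bigl[\sum_{i,j}(X_{ij}-\overline{X})^2\bigr] = n\sigma^2 - n\,\mathrm{Var}(\overline{X})$. The next step evaluates $\mathrm{Var}(\overline{X})$: since the groups $Y_1,\dots,Y_m$ are independent, only within-group covariances survive, and
\[
\mathrm{Var}(\overline{X}) = \frac{1}{n^2}\sum_{i=1}^{m}\mathbf{1}_k^T\Sigma_k\mathbf{1}_k = \frac{m}{n^2}\,\mathbf{1}_k^T\Sigma_k\mathbf{1}_k = \frac{1}{nk}\,\mathbf{1}_k^T\Sigma_k\mathbf{1}_k,
\]
where I used $n=mk$. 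Substituting back gives $\mathbb{E}S_n^2 = \frac{1}{n-1}\bigl(n\sigma^2 - \frac{1}{k}\mathbf{1}_k^T\Sigma_k\mathbf{1}_k\bigr)$, and a one-line rearrangement shows that $\frac{1}{k}\mathbf{1}_k^T\Sigma_k\mathbf{1}_k > \mathbb{E}S_n^2$ holds if and only if $\frac{1}{k}\mathbf{1}_k^T\Sigma_k\mathbf{1}_k > \sigma^2$.

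It then remains to verify this scalar inequality. Writing the quadratic form as the sum of all entries of $\Sigma_k$ gives $\mathbf{1}_k^T\Sigma_k\mathbf{1}_k = \sigma^2\bigl(k + \sum_{i\neq j}\rho_{ij}\bigr)$, which is exactly the quantity $\Delta$ appearing in Theorem~\ref{thm:extension}, so that
\[
\frac{1}{k}\mathbf{1}_k^T\Sigma_k\mathbf{1}_k = \sigma^2\Bigl(1 + \frac{1}{k}\sum_{i\neq j}\rho_{ij}\Bigr).
\]
Since each $\rho_{ij}\in(0,1]$ is strictly positive, the correction term is strictly positive and the strict inequality follows at once. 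I expect the only genuinely delicate point to be the interpretation of the statement rather than any hard estimate: it is the strict positivity of the off-diagonal correlations that makes the gap $\frac{1}{k}\mathbf{1}_k^T\Sigma_k\mathbf{1}_k > \sigma^2$ strict, and this same gap is what later drives the limiting coverage in Theorem~\ref{thm:le 1- alpha} below $1-\alpha$. If one prefers the pointwise/asymptotic reading, I would instead invoke the law of large numbers across the independent groups to conclude $S_n^2 \to \sigma^2$ in probability, after which the identical scalar inequality closes the argument.
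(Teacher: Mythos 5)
Your proof is correct, and it arrives at exactly the same two key facts as the paper --- the exact formula $\mathbb{E}S_n^2 = \frac{1}{n-1}\bigl(n\sigma^2 - \frac{1}{k}\mathbf{1}_k^T\Sigma_k\mathbf{1}_k\bigr)$ and the strict scalar gap $\frac{1}{k}\mathbf{1}_k^T\Sigma_k\mathbf{1}_k > \sigma^2$ coming from $\rho_{ij}>0$ --- but by a genuinely different computation. The paper uses the ANOVA-type decomposition $\sum_{i,j}(X_{ij}-\overline{X})^2=\sum_i\bigl[\sum_j(X_{ij}-\overline{X_i})^2+k(\overline{X_i}-\overline{X})^2\bigr]$ and evaluates the within-group term as the expectation of a quadratic form via $\mathrm{tr}\bigl[(I-\tfrac{1}{k}\mathbf{1}_k\mathbf{1}_k^T)\Sigma_k\bigr]$, then concludes, just as you do in your closing remark, from the concentration statement $S_n^2=\mathbb{E}S_n^2+o_p(1)$. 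You instead center at the true mean $\theta$ and reduce everything to $\mathrm{Var}(\overline{X})=\frac{1}{nk}\mathbf{1}_k^T\Sigma_k\mathbf{1}_k$, which avoids the trace manipulation entirely and is more elementary; it also sidesteps a typo in the paper's intermediate line (a spurious factor $k-1$), although the paper's final expression $\frac{mk^2-\Delta/\sigma^2}{(n-1)k}\sigma^2$ agrees with yours. A further merit of your write-up is that you make explicit what the paper leaves implicit: as literally stated the lemma compares a constant with a random variable of unbounded support, so it can only hold in expectation or with probability tending to one, and the paper's own $o_p(1)$ step tacitly adopts the same asymptotic reading. Your clean equivalence $\frac{1}{k}\mathbf{1}_k^T\Sigma_k\mathbf{1}_k > \mathbb{E}S_n^2 \iff \frac{1}{k}\mathbf{1}_k^T\Sigma_k\mathbf{1}_k > \sigma^2$ isolates exactly where the positivity of the correlations enters, which is also the mechanism driving the undercoverage in Theorem \ref{thm:le 1- alpha}.
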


Theorem \ref{thm:le 1- alpha} tells us that the probability is less than $1-\alpha$. The next theorem shows a more accurate result.

\begin{theorem}\label{thm:exact}
Under Assumption \ref{as:2},
  \begin{equation}\label{eq:exact}
  \lim\limits_{n\rightarrow\infty}\mathbb{P}\left(\left|\frac{\sqrt{n}(\overline{X}-\theta)}{S_n}\right|\le t_{n-1,1-\frac{\alpha}{2}}\right)=2\Phi\left(z_{1-\frac{\alpha}{2}}\left|\frac{S_n}{S_m\sqrt{k}}\right|\right)-1
  \end{equation}
  where $S_n$ and $S_m$ are the same as defined in (\ref{n-case}) and (\ref{m-case}).
\end{theorem}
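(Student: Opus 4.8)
The plan is to reduce the coverage probability on the left of (\ref{eq:exact}) to a Slutsky-type limit and then to identify the limiting constant with $z_{1-\frac{\alpha}{2}}\,\sigma\sqrt{k}/\sqrt{\Delta}$, which is exactly the probability limit of $z_{1-\frac{\alpha}{2}}\,S_n/(S_m\sqrt{k})$ (throughout I read $S_m$ as in (\ref{m-case}), consistent with the statement of the theorem). First I would pin down the exact law of the numerator. Because the groups satisfy $X_i \overset{iid}{\sim} N_k(\theta\textbf{1}_k,\Sigma_k)$, each group total $\textbf{1}_k^T X_i$ is $N(k\theta,\textbf{1}_k^T\Sigma_k\textbf{1}_k)=N(k\theta,\Delta)$, and summing over the $m$ independent groups yields $\overline{X}\sim N(\theta,\Delta/(nk))$ \emph{exactly}. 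Hence $Z_n:=\sqrt{n}(\overline{X}-\theta)/\sqrt{\Delta/k}$ is standard normal for every $n$, and the studentized statistic factorizes as $\sqrt{n}(\overline{X}-\theta)/S_n=Z_n\cdot\sqrt{\Delta/k}/S_n$.

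Next I would prove the two consistency facts. For $S_n^2$, I would write $S_n^2=\frac{1}{n-1}\sum_{i,j}(X_{ij}-\theta)^2-\frac{n}{n-1}(\overline{X}-\theta)^2$ and group the double sum as $\sum_{i=1}^m Q_i$ with $Q_i=\sum_{j=1}^k(X_{ij}-\theta)^2$ independent across groups and $\mathbb{E}Q_i=k\sigma^2$ (the diagonal of $\Sigma_k$ is $\sigma^2$). The weak law of large numbers applied \emph{at the group level} then gives $S_n^2\to\sigma^2$ in probability; the within-group correlations cause no trouble precisely because the law of large numbers is invoked over the independent $Q_i$. The same argument applied to the iid group means $\overline{X}_i\sim N(\theta,\Delta/k^2)$ gives $S_m^2\to\Delta/k^2$, so $S_m\sqrt{k}\to\sqrt{\Delta}/\sqrt{k}$ and $S_n/(S_m\sqrt{k})\to\sigma\sqrt{k}/\sqrt{\Delta}$ in probability. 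I would also record the classical fact $t_{n-1,1-\frac{\alpha}{2}}\to z_{1-\frac{\alpha}{2}}$.

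With these ingredients I would conclude by Slutsky's theorem. Since $Z_n\sim N(0,1)$ and $\sqrt{\Delta/k}/S_n\to\sqrt{\Delta/k}/\sigma$ in probability, the statistic converges in distribution to the continuous law $N(0,\Delta/(k\sigma^2))$, while the cutoff converges to $z_{1-\frac{\alpha}{2}}$. Combining convergence in distribution with a converging deterministic threshold gives
\[
\mathbb{P}\!\left(\left|\tfrac{\sqrt{n}(\overline{X}-\theta)}{S_n}\right|\le t_{n-1,1-\frac{\alpha}{2}}\right)\longrightarrow \mathbb{P}\!\left(|W|\le z_{1-\frac{\alpha}{2}}\,\tfrac{\sigma\sqrt{k}}{\sqrt{\Delta}}\right)=2\Phi\!\left(z_{1-\frac{\alpha}{2}}\,\tfrac{\sigma\sqrt{k}}{\sqrt{\Delta}}\right)-1,
\]
with $W\sim N(0,1)$, and substituting the probability limit $\sigma\sqrt{k}/\sqrt{\Delta}=\lim S_n/(S_m\sqrt{k})$ reproduces the right-hand side of (\ref{eq:exact}).

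The step I expect to demand the most care is the final substitution: the right-hand side of (\ref{eq:exact}) as literally written still contains the random quantities $S_n$ and $S_m$, so the argument must make explicit that these ratios stabilize to the deterministic constant $\sigma\sqrt{k}/\sqrt{\Delta}$ and that it is this probability limit, not the fluctuating sample values, that drives the Slutsky limit. Equally, one must check that the three asymptotic inputs (exact normality of $Z_n$, consistency of $S_n$ and $S_m$, and $t_{n-1,1-\frac{\alpha}{2}}\to z_{1-\frac{\alpha}{2}}$) can be combined simultaneously, so that joint rather than merely marginal convergence is what is used; Slutsky's theorem together with continuity of the limiting normal CDF is exactly the device that licenses this.
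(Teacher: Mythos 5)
Your proposal is correct and follows essentially the same route as the paper's own proof: exact normality of $\sqrt{n}(\overline{X}-\theta)/\sqrt{\Delta/k}$, consistency of $S_n^2$ and $S_m^2$, the limit $t_{n-1,1-\frac{\alpha}{2}}\to z_{1-\frac{\alpha}{2}}$, and Slutsky's theorem, with the random right-hand side of (\ref{eq:exact}) understood through its probability limit. If anything, you are more careful than the paper at its two weakest points: you justify $S_n^2\to\sigma^2$ by a law of large numbers over the independent group sums (the paper merely asserts $S_n^2=\mathbb{E}S_n^2+o_p(1)$ without proof), and you make explicit that the stated right-hand side must be read as the plug-in estimate whose probability limit is the deterministic value $2\Phi\left(z_{1-\frac{\alpha}{2}}\,\sigma\sqrt{k}/\sqrt{\Delta}\right)-1$, a point the paper handles by silently substituting $\widehat{\Delta}=k^2S_m^2$.
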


\section{Proofs}\label{sec:proofs}

In this section, we provide detailed proofs of each theorem introduced in Section \ref{sec:main-theorems}. We also prove related lemmas for each theorem.

First, we provide the proof of Proposition \ref{thm:prop}.

\begin{proof}[Proof of Proposition \ref{thm:prop}]

    Note that $S_n^2$ is the sample variance and it is an unbiased estimator of the population variance. So we have $\mathbb{E}(S_n^2)=\sigma^2$ and the same result with $S_m^2$, $\mathbb{E}(S_m^2)=\frac{m}{n}\sigma^2$.
    \[
    \mathbb{E}(I_n^2)=t^2_{n-1,1-\alpha/2}\frac{4\sigma^2}{n}
    \]
    \[
    \mathbb{E}(I_m^2)=t^2_{m-1,1-\alpha/2}\frac{4\sigma^2}{n}
    \]
    We only need to compare $t^2_{n-1,1-\alpha/2}$ and $t^2_{m-1,1-\alpha/2}$. Since $n > m$, we have $t_{m-1,1-\alpha/2} > t_{n-1,1-\alpha/2} >0$. Thus, we have the desired result.
\end{proof}

Now we proceed to the proof of Theorem \ref{thm:main} and  present the proofs of Lemma \ref{lm:t_monotone} and Lemma \ref{lm:small_h_pro}.

Theorem \ref{thm:main} amounts to showing that
\[
\frac{t_{m-1, 1-\frac{\alpha}{2}} \Gamma\left(\frac{m}{2}\right)}{\sqrt{m-1}\, \Gamma\left( \frac{m-1}{2}\right)} > \frac{t_{m, 1-\frac{\alpha}{2}} \Gamma\left(\frac{m+1}{2}\right)}{\sqrt{m}\, \Gamma\left( \frac{m}{2}\right)}
\]
for every $m \ge 2$. More generally, we aim to show that
\[
\frac{t_{d_1, 1-\frac{\alpha}{2}} \Gamma\left(\frac{d_1+1}{2}\right)}{\sqrt{d_1}\, \Gamma\left( \frac{d_1}{2}\right)} > \frac{t_{d_2, 1-\frac{\alpha}{2}} \Gamma\left(\frac{d_2+1}{2}\right)}{\sqrt{d_2}\, \Gamma\left( \frac{d_2}{2}\right)}
\]
for all $d_2 > d_1 \ge 1$. This inequality can be rewritten as (\ref{eq:t_ratio}).
To establish \eqref{eq:t_ratio} is the subject of Lemma \ref{lm:t_monotone}.

\begin{proof}[Proof of Lemma \ref{lm:t_monotone}]
Suppose on the contrary that
\begin{equation}\label{eq:assume_wro}
\frac{t_{d_1, 1-\frac{\alpha_0}{2}}}{t_{d_2, 1-\frac{\alpha_0}{2}}} \le \frac{\sqrt{d_1} \Gamma\left(\frac{d_1}{2}\right) \Gamma\left(\frac{d_2+1}{2}\right)}{\sqrt{d_2} \Gamma\left(\frac{d_2}{2}\right) \Gamma\left(\frac{d_1+1}{2}\right)}
%%\lim_{x \goto 1-} \frac{t_{d_1, 1-\frac{x}{2}}}{t_{d_2, 1-\frac{x}{2}}}
\end{equation}
for some $\alpha_0 \in (0,1)$. Denote by $\lambda = \frac{t_{d_1, 1-\frac{\alpha_0}{2}}}{t_{d_2, 1-\frac{\alpha_0}{2}}}$. Note that $\lambda$ must be larger than 1 since $d_2 > d_1$.
% Note that
% \[
% 1 < \lambda \le \frac{\sqrt{d_1} \Gamma\left(\frac{d_1}{2}\right) \Gamma\left(\frac{d_2+1}{2}\right)}{\sqrt{d_2} \Gamma\left(\frac{d_2}{2}\right) \Gamma\left(\frac{d_1+1}{2}\right)}.
% \]
Write $H_{\lambda}(x)$ for the difference between $\mathbb{P}(t_{d_1} \le x)$ and $\mathbb{P}(t_{d_2} \le x/\lambda)$ for $x > 0$. In particular, this function satisfies
\begin{equation}\label{eq:H_zero}
\begin{aligned}
H_{\lambda}(t_{d_1, 1 - \frac{\alpha_0}{2}}) &= \mathbb{P}(t_{d_1} \le t_{d_1, 1 - \frac{\alpha_0}{2}}) - \mathbb{P}(t_{d_2} \le t_{d_1, 1 - \frac{\alpha_0}{2}}/\lambda)\\
& = \mathbb{P}(t_{d_1} \le t_{d_1, 1 - \frac{\alpha_0}{2}}) - \mathbb{P}(t_{d_2} \le t_{d_2, 1 - \frac{\alpha_0}{2}})\\
& = 1 - \frac{\alpha_0}{2} - \left(1 - \frac{\alpha_0}{2} \right)\\
& = 0.
\end{aligned}
\end{equation}
In general, denoting the density of $t_d$ by $p_d$, $H_{\lambda}(x)$ takes the following form
\[
\begin{aligned}
H_{\lambda}(x) &= \mathbb{P}(t_{d_1} \le x) - \mathbb{P}(t_{d_2} \le x/\lambda)\\
& = \int^{x}_{-\infty} p_{d_1}(u) \d u  - \int^{x/\lambda}_{-\infty} p_{d_2}(u) \d u\\
& = \int^{x}_{0} p_{d_1}(u) \d u  - \int^{x/\lambda}_{0} p_{d_2}(u) \d u\\
%%& = \int^{x}_{0} p_{d_1}(u) \d u  - \int^{x}_{0} p_{d_2}(u/\lambda)/\lambda \d u\\
& = \int^{x}_{0} p_{d_1}(u) - p_{d_2}(u/\lambda)/\lambda \d u.
%%& = \int^{x}_{0} h_{\lambda}(u)\d u,
\end{aligned}
\]
Let $h_{\lambda}(u)$ be the integrand $p_{d_1}(u) - p_{d_2}(u/\lambda)/\lambda$. If one can show that
\begin{equation}\label{eq:H_last_step}
H_{\lambda}(x) = \int_0^{x} h_{\lambda}(u) \d u < 0
\end{equation}
for all $x > 0$ given
\[
\lambda \le \frac{\sqrt{d_1} \Gamma\left(\frac{d_1}{2}\right) \Gamma\left(\frac{d_2+1}{2}\right)}{\sqrt{d_2} \Gamma\left(\frac{d_2}{2}\right) \Gamma\left(\frac{d_1+1}{2}\right)}
\]
that follows from the assumption \eqref{eq:assume_wro}, we get a contradiction to \eqref{eq:H_zero}. Consequently, \eqref{eq:assume_wro} cannot be satisfied.

Below, Lemma \ref{lm:small_h_pro} affirms \eqref{eq:H_last_step} as the last step to prove the present lemma, thus, concluding the proof of Theorem \ref{thm:main}.

\end{proof}

\begin{proof}[Proof of Lemma \ref{lm:small_h_pro}]
Note that $H_{\lambda}(x)$ results from integrating $h_{\lambda}(x)$. The sign of $h_{\lambda}(x)$ depends on whether the ratio
\[
\begin{aligned}
\frac{\frac{\Gamma\left(\frac{d_1+1}{2} \right)}{\sqrt{\pi d_1}\Gamma\left(\frac{d_1}{2} \right)} \left(1 + \frac{x^2}{d_1}\right)^{-\frac{d_1+1}{2}}}{\frac{\Gamma\left(\frac{d_2+1}{2} \right)}{\lambda\sqrt{\pi d_2}\Gamma\left(\frac{d_2}{2} \right)} \left(1 + \frac{x^2}{\lambda^2 d_2}\right)^{-\frac{d_2+1}{2}}} & = \frac{\lambda \sqrt{d_2} \Gamma\left(\frac{d_2}{2}\right) \Gamma\left(\frac{d_1+1}{2}\right)}{\sqrt{d_1} \Gamma\left(\frac{d_1}{2}\right) \Gamma\left(\frac{d_2+1}{2}\right)} \cdot \frac{\left(1 + \frac{x^2}{\lambda^2 d_2}\right)^{\frac{d_2+1}{2}}}{\left(1 + \frac{x^2}{d_1}\right)^{\frac{d_1+1}{2}}}\\
& \equiv C_{\lambda} \frac{\left(1 + \frac{x^2}{\lambda^2 d_2}\right)^{\frac{d_2+1}{2}}}{\left(1 + \frac{x^2}{d_1}\right)^{\frac{d_1+1}{2}}}\\
\end{aligned}
\]
exceeds 1 or not. Above, we use the fact that the density of the $t$-based distribution with $d$ degrees of freedom reads
\[
p_d(x) = \frac{\Gamma\left(\frac{d+1}{2} \right)}{\sqrt{\pi d}\Gamma\left(\frac{d}{2} \right)} \left(1 + \frac{x^2}{d}\right)^{-\frac{d+1}{2}}
\]
for $x > 0$. Denote by this ratio $r_{\lambda}(x)$. It is clear that $r_\lambda(0) = C_{\lambda} \le 1$, implying that $h_{\lambda}(0) \le 0$. In addition, as $x \rightarrow \infty$, the ratio $r_{\lambda}(x) \rightarrow \infty$ as well, and this reveals that $h_{\lambda}(x) > 0$ for sufficiently large $x$.

To get a closer look, note that
\[
\begin{aligned}
\frac{\d \log r_{\lambda}(x)}{\d x} & = \frac{d_2 + 1}{2} \frac{2x}{\lambda^2 d_2 + x^2} - \frac{d_1 + 1}{2} \frac{2x}{d_1 + x^2}\\
& = \frac{(d_2-d_1)x^3 - \left[(\lambda^2 - 1)d_1d_2 + \lambda^2 d_2 - d_1 \right] x}{(\lambda^2 d_2 + x^2)(d_1 + x^2)}.
\end{aligned}
\]
The fact that $\lambda > 1$ and $d_2 > d_1$ ensures that $(\lambda^2 - 1)d_1d_2 + \lambda^2 d_2 - d_1 > 0$. Hence, we get
\[
\begin{aligned}
\frac{\d \log r_{\lambda}(x)}{\d x}  < 0, & \text{ if } 0 < x < \sqrt{\frac{(\lambda^2 - 1)d_1d_2 + \lambda^2 d_2 - d_1}{d_2 - d_1}}\\
\frac{\d \log r_{\lambda}(x)}{\d x}  > 0, & \text{ if } x > \sqrt{\frac{(\lambda^2 - 1)d_1d_2 + \lambda^2 d_2 - d_1}{d_2 - d_1}}.
\end{aligned}
\]
Thus, $r_{\lambda}(x)$, starting from $r_{\lambda}(0) \le 1$, stays below 1 for $0 < x < x_0$ and then stays above 1 for $x_0 < x < \infty$, where $x_0 > 0$ is some number determined by $d_1, d_2$ and $\lambda$. Put differently, the above discussion demonstrates that
\begin{equation}\label{eq:H_sign}
\begin{aligned}
h_{\lambda}(x) < 0, &\text{ for } 0 < x < x_0\\
h_{\lambda}(x) > 0, &\text{ for } x_0< x < \infty.
\end{aligned}
\end{equation}

Having established \eqref{eq:H_sign}, it is a stone's throw away to prove the lemma. If $x < x_0$, then \eqref{eq:H_sign} readily gives
\[
H_{\lambda}(x) = \int_0^{x} h_{\lambda}(u) \d u < 0.
\]
In the case where $x > x_0$, \eqref{eq:H_sign} together with the fact that $H_{\lambda}(\infty) = 0$ gives
\[
H_{\lambda}(x) = \int_0^{x} h_{\lambda}(u) \d u = \int_0^{\infty} h_{\lambda}(u) \d u - \int_{x}^{\infty} h_{\lambda}(u) \d u = - \int_{x}^{\infty} h_{\lambda}(u) \d u < 0,
\]
as desired.

\end{proof}

Now we begin to prove the generalized results.

\begin{proof}[Proof of Theorem \ref{thm:extension}]
  First, we need to find out the distribution of $\textbf{1}_k'Y_i$.

  Since $Y_i \sim N_k(\theta \textbf{1}_k,\Sigma_k)$, for $\textbf{1}_k'Y_i$, we have
  \[
  \textbf{1}_k'Y_i \sim N(\textbf{1}_k'\theta\textbf{1}_k,\textbf{1}_k'\Sigma_k\textbf{1}_k)\sim N(k\theta,\Delta)
  \]

Note that $\textbf{1}_k'Y_i$ is the sum of the elements in $Y_i$, namely, $\textbf{1}_k'Y_i=X_{(i-1)k+1}+\cdots+X_{ik}$.

Secondly, we can proceed to our result. Let $Z_i=\frac{1}{k}\textbf{1}_k'Y_i$. $Z_i \overset{iid}{\sim} N(\theta,\frac{\Delta}{k^2})$. We also use $S^2_m = \frac1{m-1}\sum_{i=1}^m \left(Z_i - \overline{Z}\right)^2$ to denote the sample variance.

It's easy to get that \[\mathbb{E}S_m=\sqrt{2}\sqrt{\frac{1}{m-1}}\sqrt{\frac{\Delta}{k^2}}\frac{\Gamma(\frac{m}{2})}{\Gamma(\frac{m-1}{2})}\]
where $S_m$ is actually a variant of a $\chi$ random variable.

So the expected length of the $m$-case interval is
\[
\mathbb{E}I_m^{g} = 2t_{m-1,1-\frac{\alpha}{2}}\frac{\mathbb{E}S_m}{\sqrt{m}}= \frac{2\sqrt{2}}{\sqrt{m(m-1)}}\frac{\sqrt{\Delta}}{k}t_{m-1,1-\frac{\alpha}{2}}\frac{\Gamma(\frac{m}{2})}{\Gamma(\frac{m-1}{2})}
  \]
\end{proof}

Before proceeding to the proofs of Theorem \ref{thm:le 1- alpha} and Theorem \ref{thm:exact}, we provide the proof of Lemma \ref{neq:lemma-1} first.

\begin{proof}[Proof of Lemma \ref{neq:lemma-1}]

  \begin{align}\label{neq:lemma-1-proof}
  \begin{split}
    \mathbb{E}S_n^2 & = \mathbb{E}\left\{\frac{1}{n-1}\sum\limits_{i=1}^{m}\left[\sum\limits_{j=1}^{k}(X_{ij}-\overline{X_i})^2+k(\overline{X_i}-\overline{X})^2\right]\right\} \\
     & = \frac{m}{n-1}\mathbb{E}[X_i^T(I-\frac{1}{k}\textbf{1}_k\textbf{1}_k^T)X_i]+\frac{(m-1)k}{n-1}\mathbb{E}S_m^2\\
     & = \frac{m}{n-1}\left\{tr\left[(I-\frac{1}{k}\textbf{1}_k\textbf{1}_k^T)\Sigma_k\right]+(\theta\textbf{1}_k)^T(I-\frac{1}{k}\textbf{1}_k\textbf{1}_k^T)(\theta\textbf{1}_k)\right\}+\frac{(m-1)k}{n-1}\frac{1}{k^2}\textbf{1}_k^T\Sigma_k\textbf{1}_k\\
     & = \frac{m}{n-1}tr\left[(I-\frac{1}{k}\textbf{1}_k\textbf{1}_k^T)\Sigma_k\right]+\frac{(m-1)}{n-1}\frac{1}{k}\textbf{1}_k^T\Sigma_k\textbf{1}_k\\
     & = \frac{m(k-1)}{n-1}\frac{k^2-\Delta/\sigma^2}{k}\sigma^2+\frac{m-1}{n-1}\frac{\Delta/\sigma^2}{k}\sigma^2\\
     & = \frac{mk^2-\Delta/\sigma^2}{(n-1)k}\sigma^2\\
     & = \sigma^2 + o(1)
    \end{split}
  \end{align}

  Thus, $\mathbb{E}S_n^2 < \frac{1}{k}\textbf{1}_k^T\Sigma_k\textbf{1}_k$. With the help of the equation
  \begin{equation}\label{eq:fact}
    S_n^2 = \mathbb{E}S_n^2 + o_p(1),
  \end{equation}
  we can have the desired lemma.
\end{proof}

\begin{proof}[Proof of Theorem \ref{thm:le 1- alpha}]
From the Assumption \ref{as:2}, we have
  \begin{equation}\label{eq:thm1-1}
  \overline{X_i} \sim N(\theta,\frac{1}{k^2}\textbf{1}_k^T\Sigma_k\textbf{1}_k)
  \end{equation}

and \begin{equation}\label{eq:thm1-2}
        \begin{split}
           \sqrt{n}(\overline{X}-\theta) & = \frac{1}{\sqrt{n}}\sum\limits_{i=1}^{m}\sum\limits_{j=1}^{k}(X_{ij}-\theta) \\
       & =\frac{\sqrt{m}}{\sqrt{n}}\frac{1}{\sqrt{m}}\sum\limits_{i=1}^{m}k(\overline{X_i}-\theta)\\
       & = \sqrt{\frac{1}{k}}\frac{1}{\sqrt{m}}\sum\limits_{i=1}^{m}k(\overline{X_i}-\theta)\\
       & = \sqrt{\frac{1}{n}}\sum\limits_{i=1}^{m}k(\overline{X_i}-\theta)
        \end{split}
    \end{equation}
Combine (\ref{eq:thm1-1}) and (\ref{eq:thm1-2}) together, we have
\begin{equation}\label{eq:thm1-3}
  \sqrt{n}(\overline{X}-\theta) \sim N(0,\frac{1}{k}\textbf{1}_k^T\Sigma_k\textbf{1}_k) \Leftrightarrow
  \frac{\sqrt{n}(\overline{X}-\theta)}{\sqrt{\frac{1}{k}\textbf{1}_k^T\Sigma_k\textbf{1}_k}} \sim N(0,1)
\end{equation}

Before having our final result, we still need one expression.
\begin{equation}\label{eq:thm1-4}
  \left|\frac{\sqrt{n}(\overline{X}-\theta)}{S_n}\right|=
  \left|\frac{\sqrt{n}(\overline{X}-\theta)}{\sqrt{\frac{1}{k}\textbf{1}_k^T\Sigma_k\textbf{1}_k}}\right|
  \cdot
  \left|\sqrt{\frac{\frac{1}{k}\textbf{1}_k^T\Sigma_k\textbf{1}_k}{S_n^2}}\right|
\end{equation}

With (\ref{neq:lemma-1}) and (\ref{eq:thm1-4}), now we have, with $n \rightarrow \infty$
\begin{equation}\label{eq:thm1-5}
  \begin{split}
     \lim\limits_{n\rightarrow\infty}\mathbb{P}\left(\left|\frac{\sqrt{n}(\overline{X}-\theta)}{S_n}\right|\le t_{n-1,1-\frac{\alpha}{2}}\right) & < \lim\limits_{n\rightarrow\infty}\mathbb{P}\left(\left|\frac{\sqrt{n}(\overline{X}-\theta)}{\sqrt{\frac{1}{k}\textbf{1}_k^T\Sigma_k\textbf{1}_k}}\right|\le t_{n-1,1-\frac{\alpha}{2}}\right) \\
   & = \lim\limits_{n\rightarrow\infty}\mathbb{P}\left(\left|\frac{\sqrt{n}(\overline{X}-\theta)}{\sqrt{\frac{1}{k}\textbf{1}_k^T\Sigma_k\textbf{1}_k}}\right|\le z_{1-\frac{\alpha}{2}}\right) \\
   & = 1-\alpha
  \end{split}
\end{equation}
where $z_{1-\frac{\alpha}{2}}$ is the $1-\frac{\alpha}{2}$ percentile of $N(0,1)$.
\end{proof}

\begin{proof}[Proof of Theorem \ref{thm:exact}]
  From (\ref{neq:lemma-1-proof}) and (\ref{eq:fact}), we can conclude that
  \begin{align}\label{thm:exact-proof}
    S_n^2=\sigma^2 + o_p(1)
  \end{align}
  With the help of (\ref{thm:exact-proof}), (\ref{eq:thm1-3}) and Slutsky's Theorem, we have
  \[
  \frac{\sqrt{n}(\overline{X}-\theta)}{\sqrt{\Delta/k}}\frac{\sigma}{S_n}\overset{L}{\rightarrow}N(0,1).
  \]
  Then the LHS of (\ref{eq:exact})
  \begin{equation}
  \begin{split}
    & = \lim\limits_{n\rightarrow\infty}\mathbb{P}\left(\left|\frac{\sqrt{n}(\overline{X}-\theta)}{S_n}\right|\le z_{1-\frac{\alpha}{2}}\right) \\
    & = \lim\limits_{n\rightarrow\infty}\mathbb{P}\left(\left|\frac{\sqrt{n}(\overline{X}-\theta)}{\sqrt{\Delta/k}}\right|
    \left|\frac{\sqrt{\Delta/k}}{S_n}\right|
    \le z_{1-\frac{\alpha}{2}}\right)\\
    & = \lim\limits_{n\rightarrow\infty}\mathbb{P}\left(\left|\frac{\sqrt{n}(\overline{X}-\theta)}{\sqrt{\Delta/k}}\right|
    \le z_{1-\frac{\alpha}{2}}\left|\frac{S_n}{\sqrt{\Delta/k}}\right|\right)
  \end{split}
  \end{equation}
where $\Delta$ is unknown under our assumption. However, from (\ref{eq:thm1-3}), we know that
\[
\mathbb{E}S_m^2=\frac{\Delta}{k^2}.
\]
So we can obtain the estimate of $\Delta$,
\[
\widehat{\Delta}=k^2S_m^2.
\]
The LHS of (\ref{eq:exact}) now becomes
\[
2\Phi\left(z_{1-\frac{\alpha}{2}}\left|\frac{S_n}{S_m\sqrt{k}}\right|\right)-1
\]
\end{proof}

\section{Extensions}\label{sec:extensions}

From the theorems in Section \ref{sec:main-theorems}, we can learn that with a fixed level $1-\alpha$, (\ref{n-case}) provides shorter interval than (\ref{m-case}) when the samples are $iid$. We have proposed a special case (Assumption \ref{as:1}) in which the coverage probability of (\ref{n-case}) is less than $1-\alpha$. Luckily, (\ref{m-case}) can provide us the right confidence interval. In this section, we discuss a little bit more about (\ref{m-case}). We propose some situations, in which (\ref{m-case}) can be properly utilized, if we are restricted to the time cost or the equipment.

\subsubsection*{Situation 1}

Suppose that our data are collected from $m$ locations. Gathering the data from all locations is time-consuming or impossible. In this case, we can only use (\ref{m-case}) to calculate the confidence interval. We first calculate the mean of the data in each location and then transfer these mean values to a center location. Finally, we can use (\ref{m-case}) to get the confidence interval with these mean values.

\subsubsection*{Situation 2}

In this situation, the data are stored in one machine. However the scale of the data is extremely large. Due to the restriction of the hardware, calculating the confidence interval with (\ref{n-case}) is impossible. We may split the data into $m$ equal-sized groups, calculate the mean and perform (\ref{m-case}) simultaneously.

\section{Simulations}\label{sec:simulations}

In this section, we conduct several simulations as auxiliary validations for our theorems.

\subsection{Simulation 1}\label{Sec:Simu1}

Suppose the sample size is 420 and all the samples are $iid$ from $N(0,1)$. The confidence level $1-\alpha$ is 0.95.

420 has 23 factors except 1. We let these 23 values be the number of groups, that is, $m$. We compute the length of the confidence intervals with (\ref{m-case}) and we obtain 23 values.

We repeat the above procedure 100 times and obtain 100 values for each $m$. Then we take the average of each set of 100 values and get Table \ref{table:E-length-1}. Note that when $m=420$, it's the $n$-case interval.

\begin{table}[htb]
\centering
\begin{tabular}{|c|c|c|c|c|c|c|c|c|c|c|c|c|}
\hline
m & 420  & 210  & 140  & 105  & 84   & 70 \\ \hline
$\mathbb{E}I_m$ & 0.19 & 0.19 & 0.19 & 0.19 & 0.19 & 0.19  \\ \hline
m & 60   & 42   & 35   & 30   & 28    & 21 \\ \hline
$\mathbb{E}I_m$ & 0.19 & 0.19 & 0.19 & 0.19 & 0.19  & 0.20\\ \hline
m & 20   & 15   & 14   & 12   & 10   & 7  \\ \hline
$\mathbb{E}I_m$ & 0.20 & 0.20 & 0.20 & 0.20 & 0.21 & 0.22 \\ \hline
m & 6    & 5    & 4    & 3    & 2 & \\ \hline
$\mathbb{E}I_m$ & 0.23 & 0.24 & 0.28 & 0.34 & 0.84 & \\\hline
\end{tabular}
\caption{Average lengths of confidence intervals for each $m$}\label{table:E-length-1}
\end{table}

Figure \ref{fig:Simu1} shows the results intuitively. The horizontal axis is the values of $m$, we have rescaled the axis to make the figure easy to understand. The vertical axis is the expected lengths of the confidence intervals. From the figure we can see the decreasing pattern in the lengths with increasing $m$.

\subsection{Simulation 2}\label{Sec:Simu2}

We use the same settings as those in \nameref{Sec:Simu1}, except that samples are from $N(0,100)$. Figure \ref{fig:Simu2} shows the results intuitively and more detailed results are presented in Table \ref{table:E-length-2}.
\begin{table}[htb]
\centering
\begin{tabular}{|c|c|c|c|c|c|c|}
\hline
m & 420  & 210  & 140  & 105  & 84   & 70 \\ \hline
$\mathbb{E}I_m$ & 19.15 & 19.23 & 19.25 & 19.34 & 19.45 & 19.45  \\ \hline
m & 60   & 42   & 35   & 30   & 28    & 21 \\ \hline
$\mathbb{E}I_m$ &19.55 & 19.57 & 19.92 & 20.17 & 20.20  & 20.33\\ \hline
m & 20   & 15   & 14   & 12   & 10   & 7  \\ \hline
$\mathbb{E}I_m$ & 20.59 & 20.78 & 21.03 & 21.10 & 21.73 & 23.34 \\ \hline
m & 6    & 5    & 4    & 3    & 2 & \\ \hline
$\mathbb{E}I_m$ &  24.36 & 26.26 & 29.31 & 38.80 & 100.29 & \\\hline
\end{tabular}
\caption{Average length of confidence intervals for each $m$}\label{table:E-length-2}
\end{table}

Please note that Figure \ref{fig:Simu1} and Figure \ref{fig:Simu2} look similar, while the ranges of their vertical axes are different.

\begin{figure}[htbp]
\centering
\begin{minipage}[t]{0.48\textwidth}
\centering
\includegraphics[width=\textwidth]{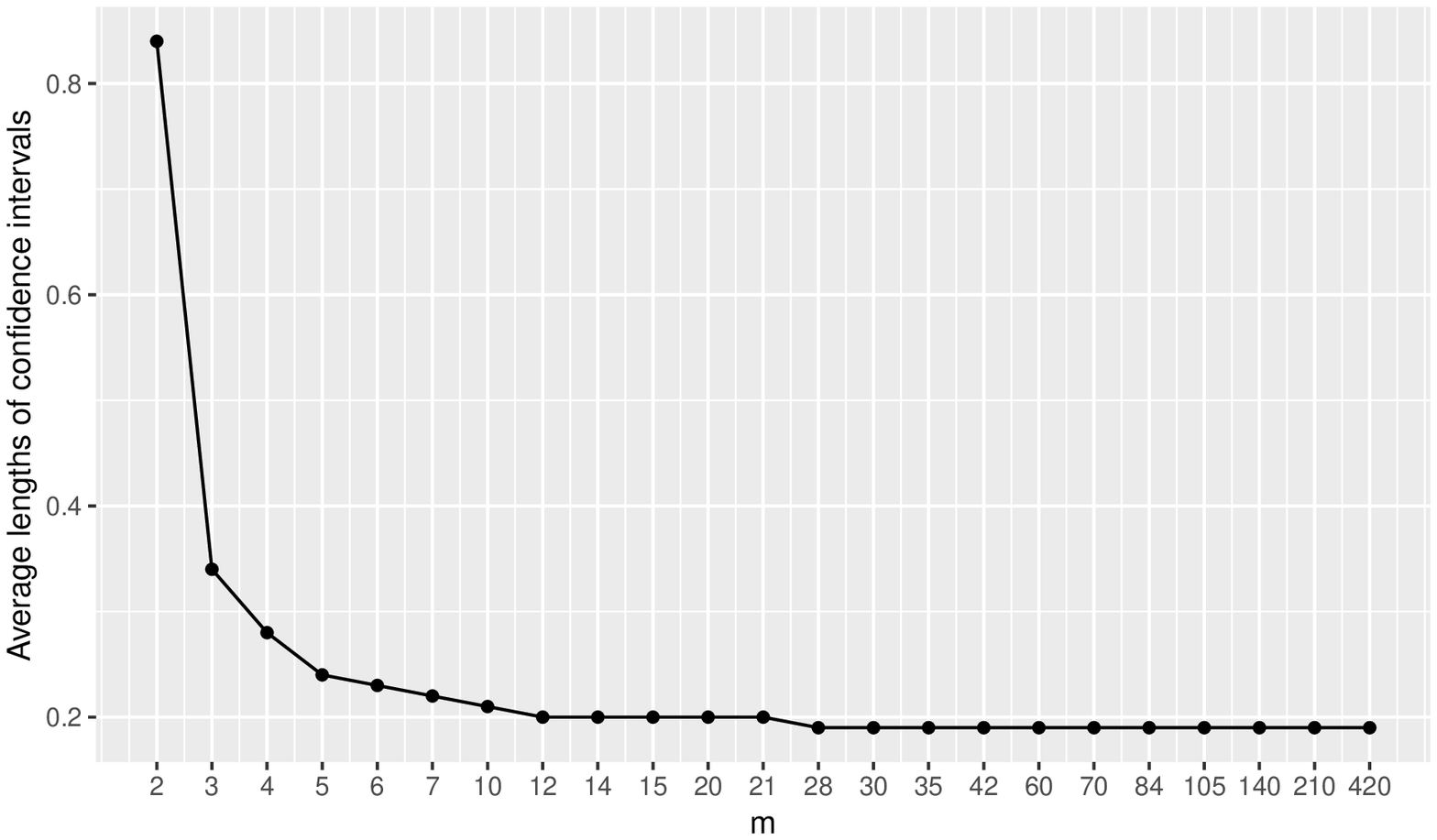}
  \caption{The Expected lengths of Confidence Intervals in \nameref{Sec:Simu1}}\label{fig:Simu1}
\end{minipage}
\begin{minipage}[t]{0.48\textwidth}
\centering
  \includegraphics[width=\textwidth]{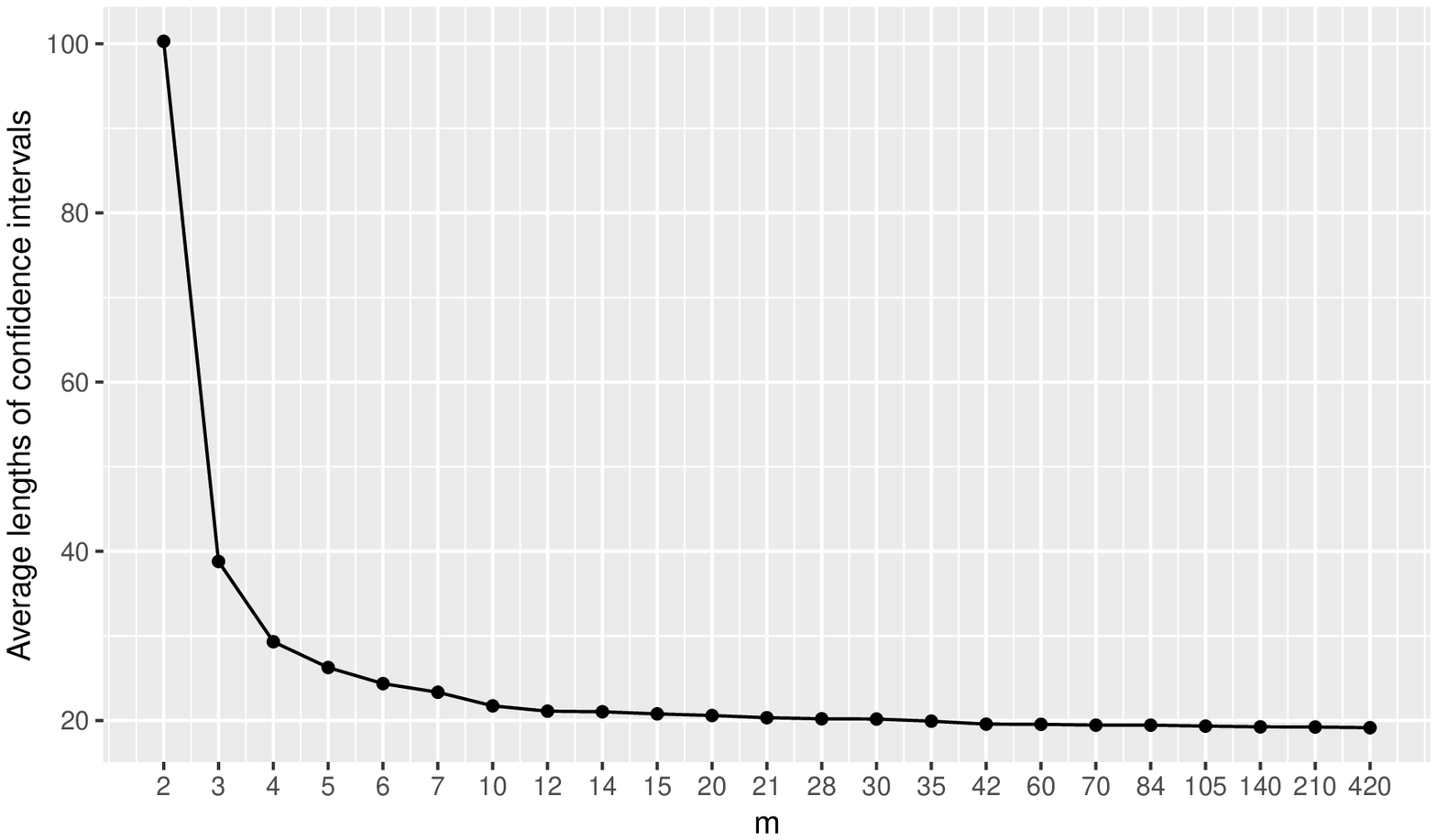}
  \caption{The Expected lengths of Confidence Intervals in \nameref{Sec:Simu2}}\label{fig:Simu2}
\end{minipage}
\end{figure}

\subsection{Simulation 3}\label{Sec:Simu3}

In this simulation, we sample our data under Assumption \ref{as:1} with a special case. We consider an equi-correlated covariance matrix, that is, all off-diagonal elements in $\Sigma_k$ are equal to a constant $\rho \in (0,1]$.

We set $n=5000, k =10,100,500,1000, \rho = 0.1,0.2,\ldots,1.0$ and obtain Table \ref{table:Simu-equi} where AP stands for the values obtained from (\ref{eq:exact}) and SP stands for the values obtained from this simulation.

\begin{table}[htbp]
\centering
\begin{tabular}{|c|c|c|c|c|c|c|c|c|c|c|}
\hline
 $\rho$    & \multicolumn{2}{c|}{1}   & \multicolumn{2}{c|}{0.9} & \multicolumn{2}{c|}{0.8} & \multicolumn{2}{c|}{0.7} & \multicolumn{2}{c|}{0.6} \\ \hline
k    & AP          & SP         & AP          & SP         & AP          & SP         & AP          & SP         & AP          & SP         \\ \hline
10   & 0.46        & 0.48       & 0.48        & 0.45       & 0.51        & 0.48       & 0.53        & 0.53       & 0.56        & 0.56       \\ \hline
100  & 0.16        & 0.14       & 0.16        & 0.19       & 0.17        & 0.17       & 0.18        & 0.16       & 0.20        & 0.25       \\ \hline
500  & 0.07        & 0.09       & 0.07        & 0.06       & 0.08        & 0.09       & 0.08        & 0.08       & 0.09        & 0.08       \\ \hline
1000 & 0.05        & 0.06       & 0.05        & 0.07       & 0.06        & 0.05       & 0.06        & 0.05       & 0.06        & 0.05       \\ \hline
 $\rho$    & \multicolumn{2}{c|}{0.5} & \multicolumn{2}{c|}{0.4} & \multicolumn{2}{c|}{0.3} & \multicolumn{2}{c|}{0.2} & \multicolumn{2}{c|}{0.1} \\ \hline
k    & AP          & SP         & AP          & SP         & AP          & SP         & AP          & SP         & AP          & SP         \\ \hline
10   & 0.60        & 0.61       & 0.64        & 0.64       & 0.69        & 0.72       & 0.76        & 0.76       & 0.84        & 0.84       \\ \hline
100  & 0.21        & 0.21       & 0.24        & 0.25       & 0.28        & 0.26       & 0.33        & 0.28       & 0.45        & 0.48       \\ \hline
500  & 0.10        & 0.08       & 0.11        & 0.10       & 0.13        & 0.12       & 0.15        & 0.17       & 0.22        & 0.23       \\ \hline
1000 & 0.07        & 0.09       & 0.08        & 0.10       & 0.09        & 0.11       & 0.11        & 0.10       & 0.15        & 0.17       \\ \hline
\end{tabular}
\caption{Coverage probability with $n$-case intervals}\label{table:Simu-equi}
\end{table}

In Figure \ref{fig:Simu3-k}, the red lines are SP values and the blue ones are AP values. We can see that the red lines exhibit a smooth decreasing tendency as the $\rho$ increases. The blue lines also have a decreasing tendency and they are close to the red lines, respectively. These plots show that for fixed $k$, the coverage probability decreases as the $\rho$ increases.

\begin{figure}[htbp]
\centering
\begin{minipage}[t]{0.48\textwidth}
\centering
\includegraphics[width=\textwidth]{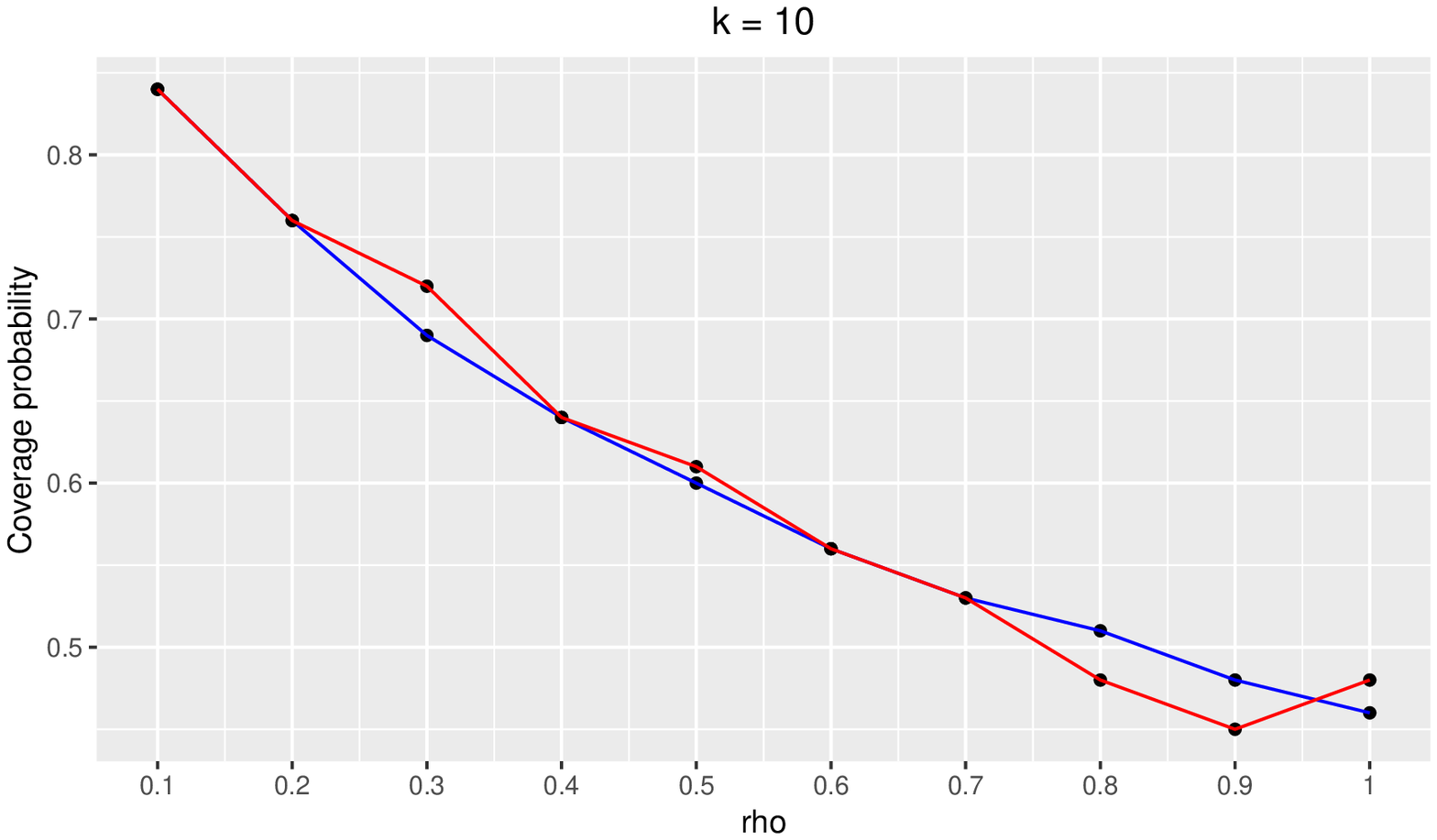}
\end{minipage}
\begin{minipage}[t]{0.48\textwidth}
\centering
  \includegraphics[width=\textwidth]{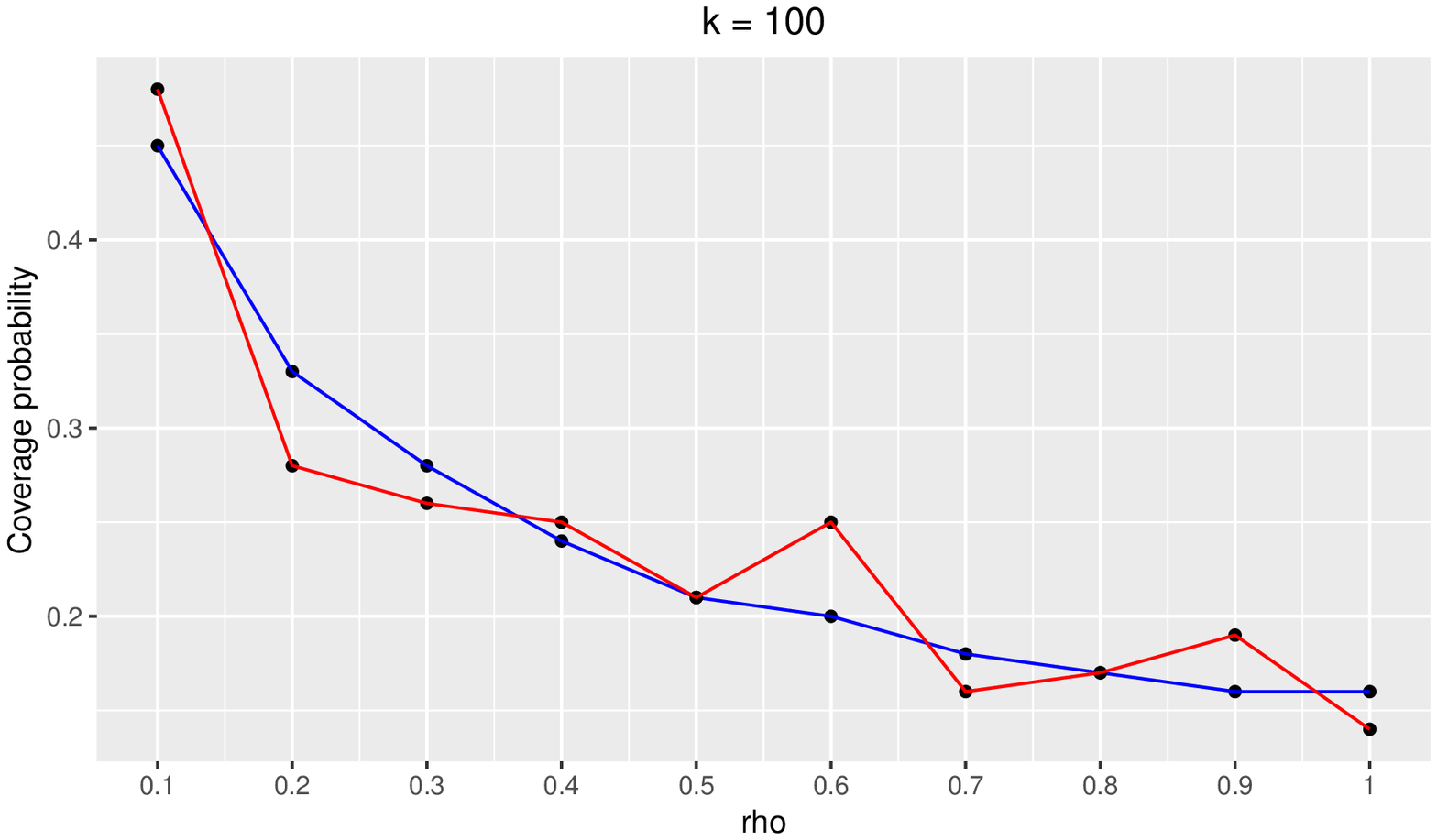}
\end{minipage}
\begin{minipage}[t]{0.48\textwidth}
\centering
\includegraphics[width=\textwidth]{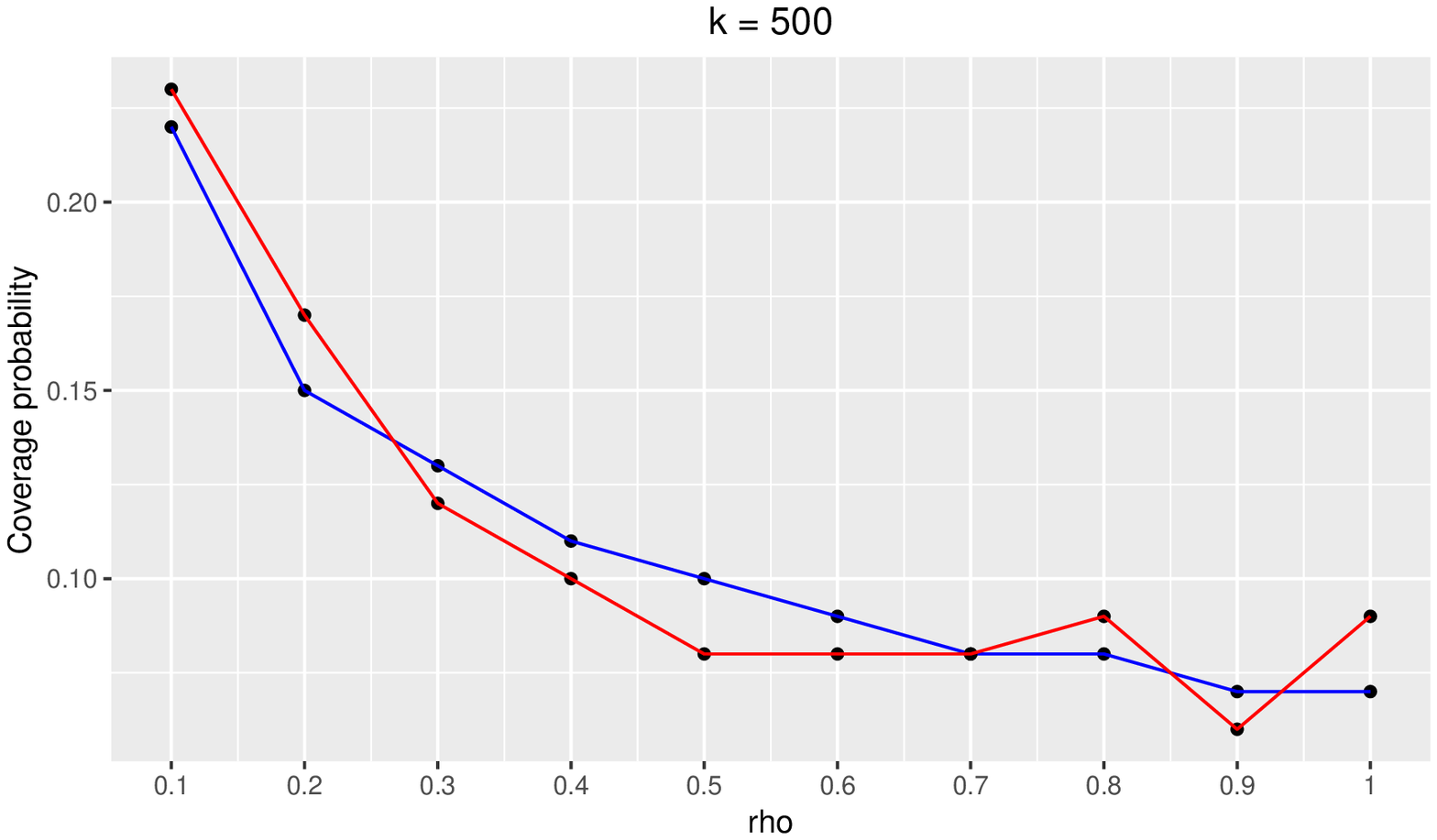}
\end{minipage}
\begin{minipage}[t]{0.48\textwidth}
\centering
  \includegraphics[width=\textwidth]{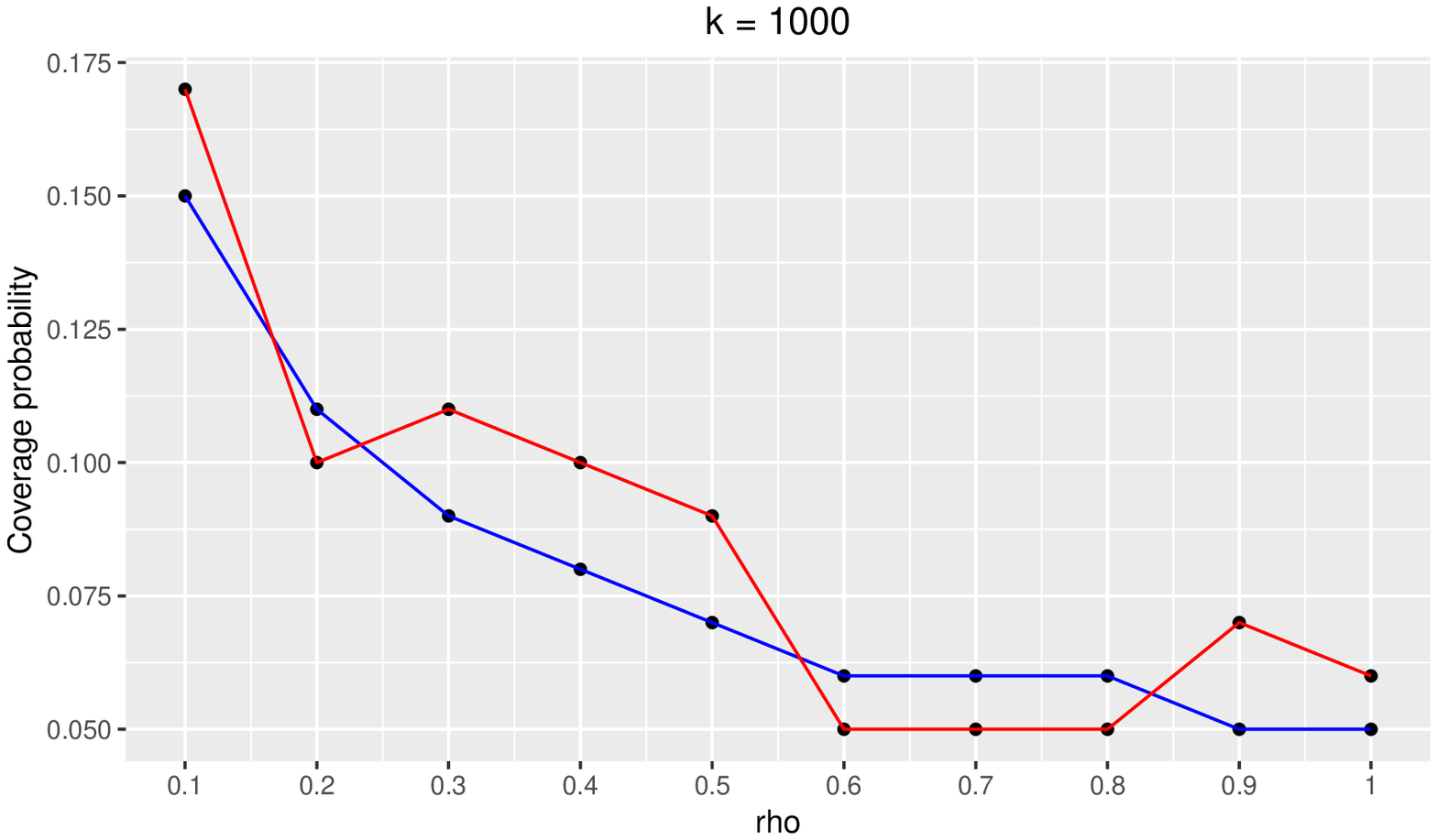}

\end{minipage}
  \caption{Different $k$}\label{fig:Simu3-k}
\end{figure}

In Figure \ref{fig:Simu3-rho}, the number at the top of each plot is the value of $\rho$, the vertical axis is the coverage probability obtained from (\ref{eq:exact}) and the horizontal axis is the values of $k$. We also rescaled the horizontal axes to make them easily understood. From these plots, we know that for fixed $\rho$, the coverage probability decreases as the $k$ increases.

\begin{figure}
  \centering
  \includegraphics[width=\textwidth]{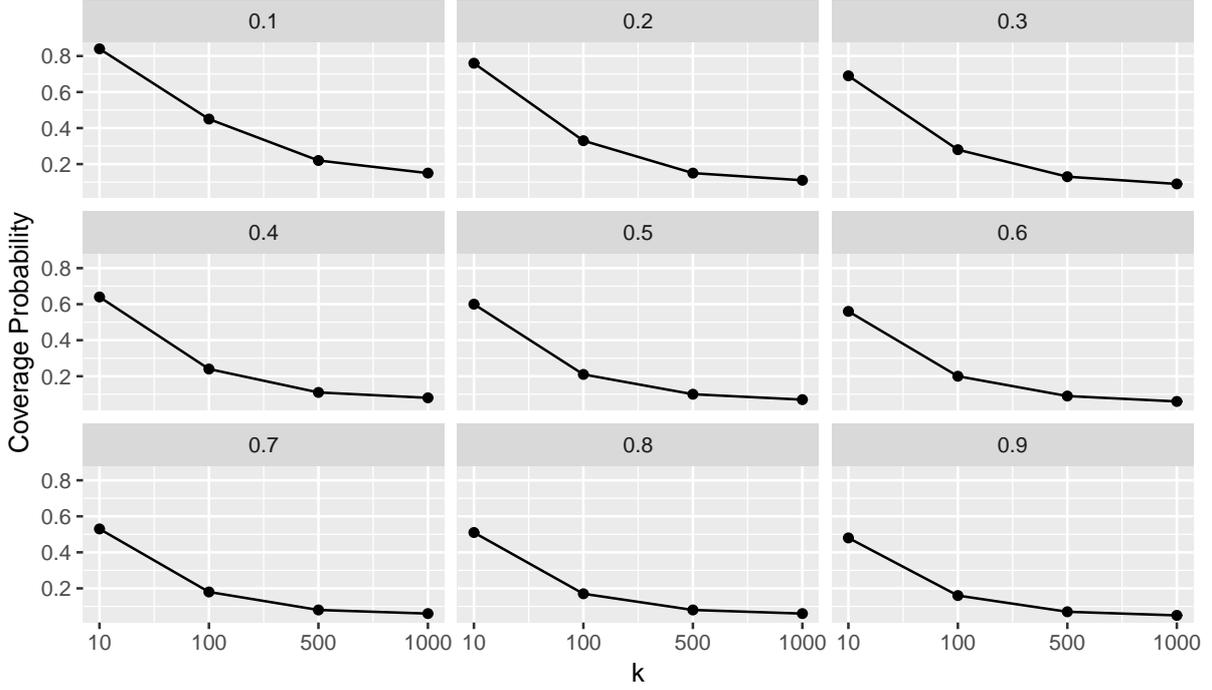}
  \caption{Different $\rho$}\label{fig:Simu3-rho}
\end{figure}

The figures in this simulation all show a decreasing tendency, which is consistent with the information revealed from (\ref{eq:exact}).

\subsection{Simulation 4}\label{Sec:Simu4}

In this simulation, we conduct a real data analysis. This dataset is from the Baltimore site of the Multi-center AIDS Cohort Study (BMACS), which included 400 homosexual men who were infected by the human immunodeficiency virus (HIV) between 1984 and 1991 \cite{Kaslow1987}.

It has 1817 rows and 6 variables. The meaning of each variable is as follows.
\begin{itemize}
  \item  ID. Subject ID
  \item  Time. Subject’s study visit time
  \item Smoke. Cigarette baseline smoking status
  \item age. Age at study enrollment
  \item preCD4. Pre-infection CD4 percentage
  \item CD4. CD4 percentage at the time of visit
\end{itemize}

We extracted the data with Time equals 0.2, then we obtained a subset with 138 observations. We calculate the confidence interval of the mean of the CD4 with this new subset.

In this case, we set $m$ to be $23,46,69,138$ and calculate the related intervals with (\ref{m-case}).

\begin{figure}[htbp]
  \centering
  \includegraphics[width=\textwidth]{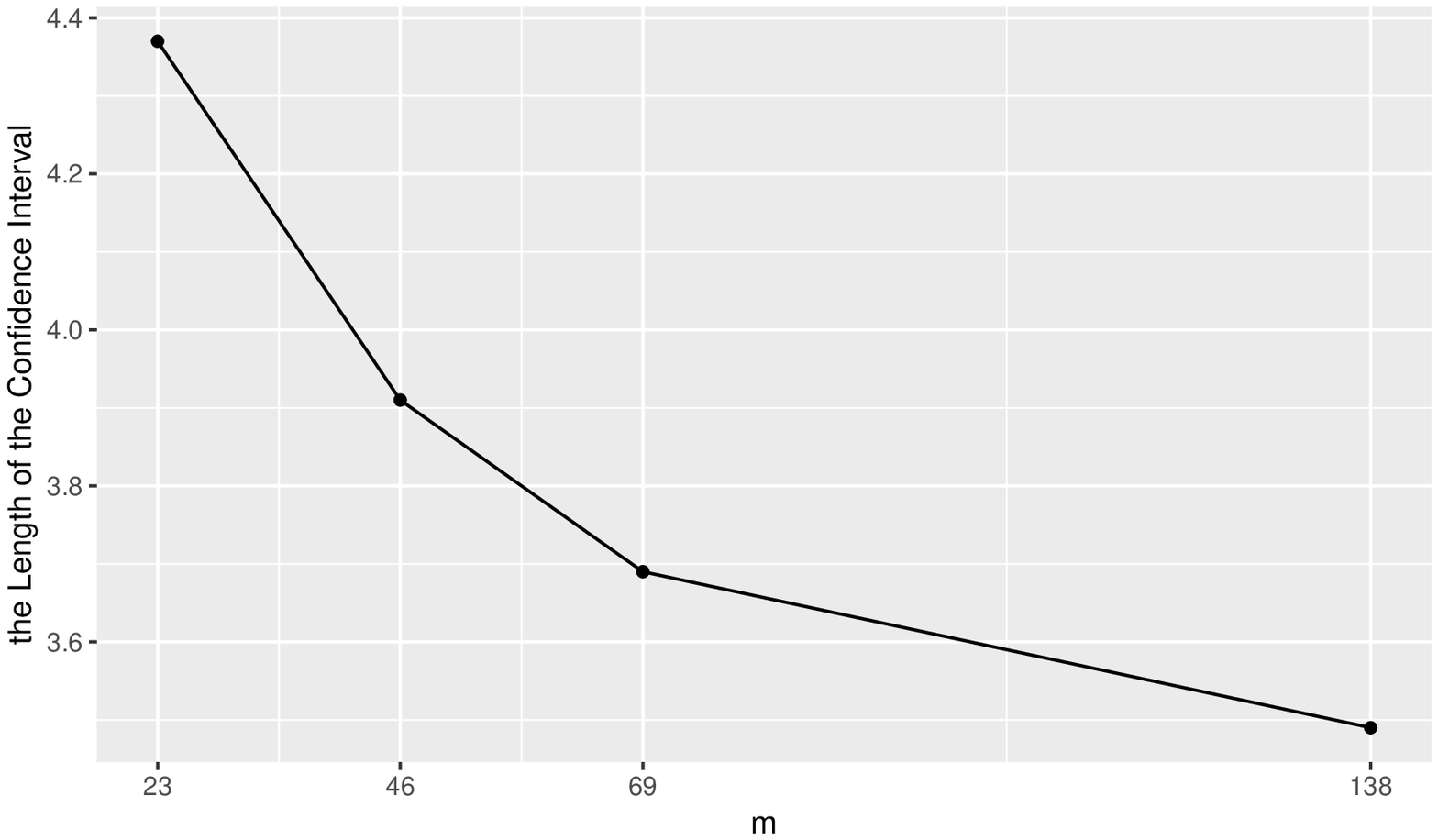}
  \caption{CD4}\label{fig:cd4}
\end{figure}

Figure \ref{fig:cd4} shows four different values of the lengths with above $m$. From Theorem \ref{thm:main}, we learn that generally speaking, a bigger $m$ indicates a shorter interval. In Figure \ref{fig:cd4}, the length decreases gradually as the $m$ increases.

\section{Discussion}\label{sec:discussion}

In this paper we first propose two ways to obtain the confidence interval for the mean parameter $\theta$, i.e., the $m$-case interval and the $n$-case interval, where generally speaking, $n \ge m$. Then we compare these two types of confidence intervals. For $iid$ samples, with fixed confidence level, the expected length of the $n$-case interval is shorter than the one of the $m$-case interval. In other words, we should always use (\ref{n-case}) to calculate the interval under the $iid$ case. Then we propose a special case, under which (\ref{n-case}) is no longer valid. However, (\ref{m-case}) is a feasible method. Although, we can't guarantee that (\ref{m-case}) is an optimal solution.

Interval (\ref{m-case}) is based on equal-sized groups. However, in real practice, our data may be divided into unequal-sized groups. In this case, the task becomes much tougher. The theory we proposed in this paper may not be proper any more when applied under certain circumstances. Future literature may seek new methods to deal with such unequal-sized cases.

In section \ref{sec:extensions}, we provide two different cases where we can utilize our work to get better results.
In addition, \cite{Su2018} is also another application of our results.

\section{Acknowledgements}

I would like to express my appreciation to Prof. Weijie Su, who has instructed me to complete this work. Special thanks to Prof. Xiangzhong Fang, my supervisor in Peking University, who gives me the freedom to do any research I'm interested in. I also appreciate Lynn Selhat, who is an excellent editor and helped me refine this paper. Finally, I acknowledge the support from China Scholarship Council.

%\newpage
%\bibliography{references}{}

\bibliographystyle{unsrt}
\bibliography{paper2-2.bib}
\end{document}